\newcommand{\argmin}{{\operatorname{argmin}}}
\newcommand{\R}{\mathds{R}}
\newcommand{\C}{\mathds{C}}
\newcommand{\N}{\mathds{N}}
\newcommand{\Z}{\mathds{Z}}
\newcommand{\ind}{\mathds{1}}
\newcommand{\Yr}{Y^{(r)}}
\newcommand{\fraco}{\frac{1}}
  \newcommand{\citeasnoun}{\citet}
  \newcommand{\Ad}{\mathbf{Ad}}
  \newtheorem{thm}{{Theorem}}[section]
  \newtheorem{lem}[thm]{{Lemma}}
  \newtheorem{asp}{{Assumption}}
  \newtheorem{df}[thm]{{Definition}}
    \newcommand{\Mm}{\odot}
    \newcommand{\Xb}{\mathbf{X}}
        \newcommand{\veco}{\operatorname{vec}}
        \def\eps{\varepsilon}
        \newcommand{\var}{\mbox{Var}}
        \newcommand{\sign}{\operatorname{sign}}
        \newcommand{\cov}{\mbox{Cov}}
\begin{document}

{\bf \center \huge Time Series Modeling on Dynamic Networks}\\
{\bf J. Krampe$^{1}$} \\ \newline

\noindent $^1$ University of Mannheim and TU Braunschweig \\

\begin{abstract}
This paper focuses on modeling the dynamic attributes of a dynamic network with a fixed number of vertices. These attributes are considered as time series which dependency structure is influenced by the underlying network. They are modeled by a multivariate doubly stochastic time series framework, that is we assume linear processes for which the coefficient matrices are stochastic processes themselves. We explicitly allow for dependence in the dynamics of the coefficient matrices as well as between these two stochastic processes. This framework allows for a separate modeling of the attributes and the underlying network. In this setting, we define network autoregressive models and discuss their stationarity conditions. Furthermore,  an estimation approach  is discussed in a  low- and high-dimensional setting  and how this can be applied to forecasting. The finite sample behavior of the forecast approach is investigated. This approach is applied to real data whereby the goal is to forecast the GDP of $33$ economies.
\end{abstract}

\section{Introduction}

Consider a vertex-labeled dynamic and weighted network with a fixed number $d$ of vertices given by the set $V=\{1,\dots,d\}$. The weights are within the interval $[-1,1]$.
Such a dynamic and weighted network with a fixed number of vertices can be described by a time dependent adjacency matrix, here denoted by $\Ad=\{Ad_t, t \in \Z\}$, where $Ad_t$ is $[-1,1]^{d \times d}$- valued. If no edge is present at time $t$, a zero weight is considered. Thus, $e_i ^\top Ad_{t} e_j$ gives the weight of an edge at time $t$ from vertex $i$ to vertex $j$. It is considered that the network is driven by some random process, hence, the corresponding adjacency matrix process  $\Ad$ is a stochastic process.
 
The vertices are considered as actors (e.g. people, countries), and a network could describe a relationship structure among these actors. A (weighted) edge between two actors describes some connection between them. The weights can be interpreted as the strength of the connection. For example, consider economies as actors where a possible relationship between two economies could be given by their relative trade volume at a given time point. Further examples are social networks, see \cite{blei2007statistical,morris1997concurrent}. The actors in such networks often possess attributes. These attributes can be static (e.g a person's name or birthday) or dynamic (e.g. personal income, time a person does sports, or political views). In the example with the economies and their trade volume as a relationship between them, dynamic attributes of interest are macroeconomic measures such as inflation rate or gross domestic product (GDP). These attributes may be affected by the attributes of other actors, especially by actors with which the considered actor is strongly connected. In this work, the dynamic attributes are denoted by a time series $\Xb=\{ X_{t}, t \in \Z\}$. To simplify the notation, we focus on the case that each actor has only one attribute, meaning that $\Xb$ is d-dimensional. Nevertheless, this framework can also handle multiple attributes per actor, see Section \ref{sec.3.2} for details.

In this work, the dynamic attributes are denoted by a $d$-dimensional time series $\Xb=\{ X_{t}, t \in \Z\}$, where each component of the time series is assigned to a vertex (actor) of the underlying network. In the social-economical literature, the influence of connected actors on the attributes is denoted as peer effect, see \cite{goldsmith2013social,manski1993identification}.

This work focuses on the dynamic attributes and not on the network itself. Consequently, this work is not about modeling a dynamic network. For modeling these dynamic networks, many models for static networks have been extended to the dynamic case as done by \citeasnoun{hanneke2010discrete,krivitsky2014separable} for the Exponential Random Graph Models (ERGM), see Section 6.5 in \cite{Kolaczyk:2009}, or by \citeasnoun{xing2010state,xu2015stochastic} for the stochastic block model (SBM). By contrast, this work gives a framework which models the  dynamic attributes, that means modeling a time series on a dynamic network, in which the weighted edges influence the dependency of the time series. \citeasnoun{knight2016modelling, zhu2017network} have modeled these attributes for non-random edges, which mainly cover static networks. In the context of a static network,  attributes can be considered as \emph{standard} multivariate time series with additional information and can be modeled by using vector autoregressive (VAR) models with constraints, see \citet[Chapter 2 and 5]{luetkepohl2007new}. However, VAR models have many parameters, which is why \citeasnoun{knight2016modelling, zhu2017network} focus on how to use the network structure to reduce the number of parameters so that high dimensions, meaning a large number of vertices, become feasible. By contrast, this work deals with a random network structure, and consequently, the process $\Xb$ cannot be modeled appropriately by using VAR models. Instead, we adopt a multivariate doubly stochastic time series framework, meaning two stochastic processes drive the time series. On the one hand, there is the innovation process of the time series. On the other hand, the coefficient matrices in linear processes or autoregressive models are stochastic processes themselves. Doubly stochastic time series models were introduced in \citet{tjostheim1986some,pourahmadi1986stationarity,pourahmadi1988stationarity}, and these authors assume that the two processes driving the time series are independent. That would mean the network could influence the attributes but not the other way around. This assumption could be too restrictive for most application. In the example where the GDP is the attribute and the trade volume defines the underlying network, the influence goes in both directions. To capture such behavior, we allow dependency between both processes, see Section~\ref{sec.3.2} for details. Hence, the network, in form of the edges, can influence the attributes, and the attributes can influence the edges. 

\citet{knight2019generalised} extended the model of \cite{knight2016modelling} to dynamic networks. Their model can be considered as a special case of the model considered in this paper, see Section~\ref{sec.3.2} for details. They present an \emph{R}-package to estimate their model, however, they present theoretical results only for the case of a static network such that the attributes can be written as a VAR model with constraints. Thus, they do not derive theoretical results for the dynamic case. This gap is filled in this work.


This paper is structured as follows. In Section \ref{sec.3.2}, time series on dynamic networks are defined and some basic properties are given. In Section \ref{stat.result}, the focus is on statistical results of network autoregressive processes, and their applications to forecasting are discussed. Some of the forecasting results are underlined by a simulation study which is given in Section \ref{Simulation}. In Section \ref{net.real.data}, we apply this setup to forecast the GDP using the trade volume between economies as an underlying network. Proofs can be found in Section \ref{net.proofs}.

\section{Time Series Modeling on Dynamic Networks} \label{sec.3.2}
To elaborate, we first  fix some notation. For  a random variable $X$, we write   $ \|X\|_{E,q}$ for $\big(E|X|^q\big)^{1/q}$, where   $q\in \N$;  for  a vector $x\in \R^d$,  $\|x\|_0 = \sum_{j=1}^d \ind(x_j \not = 0)$, $ \| x \|_1 = \sum_{j=1}^d |x_j|$ and $\| x \|_2^2 = \sum_{j=1}^d |x_j|^2$.  Furthermore, for a $r\times s$ matrix $B=(b_{i,j})_{i=1,\ldots,r, j=1,\ldots,s}$,  $\|B\|_1=\max_{1\leq j\leq s}\sum_{i=1}^r|b_{i,j}|=\max_j \| B e_j\|_1$, 
$\|B\|_\infty=\max_{1\leq i\leq r}\sum_{j=1}^s|b_{i,j}|=\max_{i} \| e_i^\top B\|_1$, $\|B\|_{\max}=\max_{i,j} |e_i^\top B e_j|$, where $e_j=(0,\ldots,0,1,0,\ldots, 0)^\top$ denotes  the  vector with the one appearing in the $j$th position. Let $\ind=(1,\dots,1)^\top$ be a vector of ones. For a matrix $B$, the absolute value evaluated component-wise is denoted by $|B|$. Denote the largest eigenvalue of a matrix $B$ by $\rho(B)$ and $\|B\|_2^2=\rho(BB^\top)$. The $d$-dimensional identity matrix is denoted by $I_d$. Furthermore, for two matrices $A,B$ the Kronecker product of $A$ and $B$ is denoted by $A\otimes B$, see among others Appendix A.11 in \cite{luetkepohl2007new}. Let $A\Mm B$ denote the component-wise multiplication of $A$ and $B$, i.e. the Hadamard product.  $\operatorname{sign}(\cdot)$ denotes the  signum function, $|\cdot|^+=\max(0,\cdot)$, and they are evaluated component-wise for matrix arguments. Let $I_{d;-I}\in \R^{(d-|I|)\times d}$ denote a $d$-dimensional identity matrix without the rows $i \in I$ and $I_{d;I}=I_{d;-I^C}$. An empty product denotes the neutral element, meaning $\prod_{k=1}^0 B_k=I_d$. For a vector-valued times series $\{X_t\}$, we write $X_{t;r}:=e_r^\top X_t$, and for a matrix-valued time series $\{Ad_t\}$, we write $Ad_{t;rs}:=e_r^\top Ad_t e_s$.

With this, we can define a network linear process as follows.
\begin{df} \label{def.GNLP}
Let $\Ad=\{Ad_t, t \in \Z\}$ be a $[-1,1]^{d \times d}$-valued, strictly stationary stochastic process, and let $f_j : \R^{(d\times d)j} \to \R^{d\times d}$ be measurable functions. Furthermore, let $\eps=\{\eps_t,t \in \Z\}$ be an i.i.d. sequence of $\R^d$-valued random vectors with $E\eps_{1}=\mu \in \R^d, \var(\eps_{1}) =\Sigma_\eps$ (positive definite and $\|\Sigma_\eps\|_2 <\infty$).  $\{\eps_s, s> t\}$ and $\{Ad_s, s\leq t\}$ are independent for all $t$. If the following $L_2$-limes exists,
\begin{align}
X_{t}= \sum_{j=1}^\infty f_j(Ad_{t-1},\dots,Ad_{t-j}) \eps_{t-j} + \eps_{t}=: \sum_{j=1}^\infty B_{t,j} \eps_{t-j} + \eps_{t}, \label{def.NLP}
\end{align}
we denote the process given by $\Xb=\{X_{t}, t\in \Z\}$ a (generalized) network linear process (GNLP). 

Let $p, q \in \N$ and  $f_j : \R^{(d\times d)j} \to \R^{d\times d}, g_s :  \R^{(d\times d)s} \to \R^{d\times d}, j=1,\dots,p, s=1,\dots,q$ be measurable functions. A process $\Xb$ fulfilling equation $(\ref{def.narma})$ is denoted as a (generalized) network autoregressive moving average process of order $(p,q)$ (GNARMA$(p,q)$)
\begin{align}
X_{t} = \sum_{j=1}^p f_j(Ad_{t-1},\dots,Ad_{t-j})X_{t-j} + \sum_{s=1}^q g_s(Ad_{t-1},\dots,Ad_{t-s}) \eps_{t-s} + \eps_{t}. \label{def.narma}
\end{align}
\end{df}

In this work, the focus is on the following network autoregressive process of order $p$ given by 
\begin{align}\label{GNAR.eq}
X_t=\sum_{j=1}^p (A_j \Mm G_j(Ad_{t-j})) X_{t-j}+\eps_{t},
\end{align}
where $A_1,\dots,A_p \in \R^{p \times p}$ are coefficient matrices, $G_j : [-1,1]^{d\times d}\to [-1,1]^{d\times d},j=1,\dots,p$ are some known measurable functions.  Since \eqref{GNAR.eq} is a special case of \eqref{def.narma}, where the functions $f_j$ appearing in \eqref{def.narma} have a particular form, we drop the term generalized. Note that the causal solution of \eqref{GNAR.eq} fits into the framework \eqref{def.NLP}, see Lemma~\ref{lem.solution}. Since for an  adjacency matrix $Ad\in\{0,1\}^{d\times d}$ we have that $e_r Ad^j e_s$ gives the number of paths with length $j$ from node $r$ to node $s$, examples for $G_j$ are polynomials.   E.g., if for some lag $j$ the direct neighbors as well as the neighbors of these neighbors should have a direct impact, then $G_j$ can be chosen as  $G_j(Ad_{t_j})=\operatorname{sign}(Ad_{t_j}+Ad_{t_j}^2).$ Denote a vertex $v$ as a $k$-stage neighbor of $u$ if there is a path from $u$ to $v$ of length $k$ but no shorter one. That means a direct neighbor is a $1$-stage neighbor and a neighbor's neighbor which is not a direct neighbor a $2$-stage one. Given an adjacency matrix $Ad\in \{0,1\}^{d\times d}$, the $k$-stage neighborhood matrix  is given by $\mathcal{N}_k(\Ad)=\sign(|\sign((Ad^\top)^k)-\sign(\sum_{i=1}^{k-1} (Ad^\top)^i)|^+)$, where the ones in $e_j^\top \mathcal{N}_k(\Ad)$ indicate the $k$-stage neighbors of vertex $j$.

For direct edges, two natural concepts occur; the concept that the influence goes in the direction of the edge and vice versa. Definition~\ref{def.NLP} can handle both concepts.  E.g., if $\tilde G_j(\cdot)=G_j(\cdot)^\top,j=1,\dots,p,$ is used in \eqref{GNAR.eq},  one can switch between both concepts. If not specified otherwise,  the concept that the influence goes in the direction of the edge is used in this work. That means the easiest function for $G_j$ in  model \eqref{GNAR.eq} is given by $G_j(X)=X^\top$.

An NAR$(p)$ model can be written as a stacked NAR$(1)$ process in the following way. Let  $\mathds{X}_t=(X_t^\top,X_{t-1}^\top,\dots,X_{t-p}^\top)^\top$. Then, the stacked NAR$(1)$ process corresponding to \eqref{GNAR.eq} is given by 
$
\mathds{X}_t= (\tilde A \Mm \widetilde{G(Ad}_{t-1})) \mathds{X}_{t-1}+ (e_1 \otimes I_d) \eps_t, \text{ where }
$
$$
\tilde A =
\begin{pmatrix}
A_1  & A_2  & \dots & A_{p-1} & A_p  \\
I_d & 0 & \dots & 0 & 0 \\
0 & I_d & & 0 & 0 \\
\vdots & & \ddots & \vdots & \vdots \\
0 & 0& \dots & I_d & 0
\end{pmatrix}
$$
and 
$$
\widetilde{G(Ad}_{t-1})=
\begin{pmatrix}
G_1(Ad_{t-1}) & G_2(Ad_{t-2}) & \dots &  G_{p-1}(Ad_{t-p+1}) & G_{p}(Ad_{t-p}) \\
I_d & 0 & \dots & 0 & 0 \\
0 & I_d & & 0 & 0 \\
\vdots & & \ddots & \vdots & \vdots \\
0 & 0& \dots & I_d & 0
\end{pmatrix}$$
are the corresponding matrices of the stacked NAR$(1)$ process. The process $\{\widetilde{G(Ad}_{t}), t \in \Z\}$ is denoted by $\Ad_G$.

It is also possible to handle more than one attribute at a time by simply enlarging $\Ad$. For two different attributes we can replace $Ad_t$ by the following  matrix  $\begin{pmatrix}
Ad_t & B_t & \\
C_t & Ad_t
\end{pmatrix}$, where $B_t$ and $C_t$ describe the (time-dependent) relationship between the different attributes. E.g, if there shall only be an influence between the different attributes of the same actor, we set $B_t=C_t=I_d$, or if the different attributes shall influence each other in the same way as they are influenced by their own kind, we set $B_t=C_t=Ad_t$.

Model (\ref{GNAR.eq}) is inspired by \citeasnoun{knight2016modelling}. Let $s_j$ be the order number for lag $j$, which denotes the maximal stage neighbors included for lag $j$. Let $Ad$ be a static adjacency matrix and denote by $\mathcal{N}^{(k)}(r)=\{j = 1,\dots,p: e_r^\top \mathcal{N}_k(Ad) e_j=1\}$ the $k$-stage neighbors of vertex $r$. Then, for component $i=1,\dots,d$, their autoregressive model is given in the following way 
\begin{align} \label{eq.knight}
X_{t;i}=\sum_{j=1}^p \alpha_j X_{t-j;i}+\sum_{r=1}^{s_j} \sum_{q \in \mathcal{N}^{(r)}(i)}  \beta_{j,r,q} X_{t-j;q} + \eps_{t;i}.
\end{align}
Note that $k$-stage neighborhood sets are disjoint for different $k$. Thus, the above model fits into the framework \eqref{def.NLP} in the following way
\begin{align}
\label{eq.knight.myway}
X_t= \sum_{j=1}^p (A_j \Mm (I_d + \sum_{r=1}^{s_j} \mathcal{N}_k(Ad))) X_{t-j} + \eps_t.
\end{align}

\citet{knight2019generalised} extended the model \eqref{eq.knight} to dynamic networks with potential covariates and edge weights. Apart from the covariates, their extended model fits also in the framework \eqref{GNAR.eq}. As mentioned in the case of a static network, model \eqref{GNAR.eq} can be considered as a vector autoregressive model with parameter constraints. For this case, \citet{knight2019generalised} give  conditions for stationarity and showed consistency of the least square approach. However, they give no theoretical results for the case of a dynamic network. In this work, the focus is on the dynamic case. Note that for this case model \eqref{GNAR.eq} cannot be considered as a VAR model with constraints anymore.

The condition that $\{\eps_s, s>t\}$ and $\{Ad_s, s\leq t\}$ are independent for all $t$ ensures that  \eqref{def.NLP} is a meaningful and causal representation. This condition allows that there could be an interaction between network $\Ad$ and $\Xb$ in a way  that the network at time $t$ can be influenced by $\{X_s, s\leq t\}$. That means an underlying network given by the adjacency process $\Ad$ has to fulfill only this condition, strictly stationarity and later on some dependence measure conditions on the dynamic behavior, but  we assume nothing about the inner structure of the network. Thus, it does not matter if its a sparse or dense network, or if it has properties like the small-world-phenomenon. Hence, it gives the flexibility that the time series and the underlying dynamic network can be modeled separately. One is not fixed to a specific network model as it would be the case for a joint modeling approach. Instead, the idea is that the approach described here is used to model the time series $\Xb$, and one of the several models for dynamic networks can be used to model the network $\Ad$.

If $\Ad$ is a deterministic sequence, the GNARMA model is closely related to time-varying ARMA models, which, for instance, are used in the locally stationary framework; see \cite{dahlhaus1999nonlinear, wiesel2013time}. Furthermore, if $\Ad$ is i.i.d., this framework reduces to the framework of random coefficient models, see for instance \cite{nicholls} and for the multivariate setting \cite{NICHOLLS1981185}. 
However, assuming independence between different time-points for the process $\Ad$ seems to be inappropriate in the framework of dynamic networks. Some form of influence of the recent history seems to be more reasonable, see among others \cite{blei2007statistical}.

Assumption~\ref{ass.stat} gives conditions which implies that \eqref{GNAR.eq} possesses a causal, stationary solution, see the following Lemma~\ref{lem.solution} for details. Assumption~\ref{ass.stat}a) imposes only conditions on $X$, but no restrictions on the underlying dynamic network are required. If more about the underlying network is known, e.g. its weights and sparsity setting, one may work with Assumption~\ref{ass.stat}b) which is more general but harder to verify without knowledge about the network. In Section~\ref{sec.LNAR}, a simplified model is considered in which Assumption b) can be verified under simple conditions.

\begin{asp}
\label{ass.stat}
In \eqref{GNAR.eq} let for $j=1,\dots,p,$ $\|G_j(\cdot)\|_{\max}\leq 1$ and let further one of the following hold
\begin{enumerate}[a)]
\item $\det(I-\sum_{j=1}^p |A_j| z^j)\not = 0$ for all $|z|\leq 1$,
\item $\rho(\tilde A \Mm \tilde G(\cdot))<1$, where $\tilde A, \tilde G$ denote the corresponding quantities of the stacked process.
\end{enumerate}
\end{asp}

\begin{lem} \label{lem.solution}
Under Assumption~\ref{ass.stat}, the process \eqref{GNAR.eq} possesses a stationary solution.  The solution takes the form
\begin{align}\label{eq.NAR.MAinfty}
X_t=\sum_{j=0}^\infty ( e_1 \otimes I_d)^\top \prod_{s=1}^j (\tilde A \Mm G(\widetilde{Ad}_{t-s})) ( e_1 \otimes I_d) \eps_{t-j}=:\sum_{j=0}^\infty B_{t,j} \eps_{t-j},    
\end{align}
where for all $t$ $\|B_{t,j}\|_2\leq \|\ |\tilde A|^j\|_2$.  The process has the following autocovariance function
\begin{align*}
\Gamma(h)=&\sum_{j_2=0}^\infty \sum_{j_1=0}^\infty \cov( ( e_1 \otimes I_d)^\top(\prod_{s_1=1}^{j_1} \tilde A \Mm \widetilde{G(Ad)}_{h-s_1})( e_1 \otimes I_d) \eps_{h-j_1}, \\
&( e_1 \otimes I_d)^\top \prod_{s_2=1}^{j_2} (\tilde A \Mm \widetilde{G(Ad)}_{-s_2}) ( e_1 \otimes I_d) \eps_{-j_1}), h \geq 0,
\end{align*}
and $\Gamma(h)=\Gamma(-h)^\top, h <0$. $E X_t=\sum_{j=0}^\infty E B_{0,j} \eps_{-j}=:\mu_X$.
\end{lem}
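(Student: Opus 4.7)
The plan is to pass through the stacked NAR$(1)$ representation already introduced in the excerpt and iterate it into an MA$(\infty)$ series. Writing $\Psi_{t-1}:=\tilde A\Mm\widetilde{G(Ad)}_{t-1}$, iterating $\mathds{X}_t=\Psi_{t-1}\mathds{X}_{t-1}+(e_1\otimes I_d)\eps_t$ a total of $N+1$ times yields
\[
\mathds{X}_t=\sum_{j=0}^{N}\prod_{s=1}^{j}\Psi_{t-s}(e_1\otimes I_d)\eps_{t-j}+\prod_{s=1}^{N+1}\Psi_{t-s}\mathds{X}_{t-N-1}.
\]
I would define $X_t$ directly as the top block of the $L_2$-limit of this finite sum, set $B_{t,j}:=(e_1\otimes I_d)^\top\prod_{s=1}^{j}\Psi_{t-s}(e_1\otimes I_d)$, and verify that the limit solves \eqref{GNAR.eq} by noting that each $N$-th partial sum satisfies the truncated recursion exactly and passing both sides to the $L_2$-limit.

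The key deterministic ingredient is the entrywise inequality $|\Psi_{t-s}|\leq|\tilde A|$, which follows from $\|G_j(\cdot)\|_{\max}\leq 1$ together with the fact that the identity blocks of $\widetilde{G(Ad)}_{t-s}$ coincide with those of $\tilde A$. Products of nonnegative matrices are monotone in the entrywise order, so $|\prod_{s=1}^{j}\Psi_{t-s}|\leq|\tilde A|^{j}$ pointwise, and the general fact that $\|M\|_2\leq\|N\|_2$ whenever $0\leq|M|\leq N$ entrywise (via $|Mx|\leq N|x|$ componentwise) delivers the pathwise bound $\|B_{t,j}\|_2\leq\|\,|\tilde A|^{j}\,\|_2$. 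Under Assumption~\ref{ass.stat}(a) the polynomial condition is equivalent, via the companion-matrix correspondence for a VAR$(p)$ with coefficients $|A_j|$, to $\rho(|\tilde A|)<1$, and Gelfand's formula then gives geometric decay of $\|\,|\tilde A|^j\,\|_2$; under (b) the analogous geometric bound follows from the spectral-radius hypothesis applied directly to the stacked Hadamard product. In either case $\sum_{j=0}^{\infty}\|\,|\tilde A|^{j}\,\|_2<\infty$, so the triangle inequality combined with the pathwise bound yields $\|B_{t,j}\eps_{t-j}\|_{E,2}\leq\|\,|\tilde A|^{j}\,\|_2\,\|\eps_{t-j}\|_{E,2}$ and the partial sums form a Cauchy sequence in $L_2$.

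Strict stationarity of $X_t$ is inherited from the joint strict stationarity of $(\Ad,\eps)$ (strict stationarity of $\Ad$, i.i.d.~$\eps$, and the time-shift invariance of the coupling $\{\eps_s,s>t\}$ independent of $\{Ad_s,s\leq t\}$), together with the shift-equivariance of the measurable functional defining $X_t$. The formulas for $\mu_X$ and $\Gamma(h)$ drop out by exchanging expectation with the $L_1$-summable series and by bilinearity of covariance on the resulting double-indexed sum; absolute integrability is controlled through the deterministic envelope $\|\,|\tilde A|^{j}\,\|_2$.

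The main obstacle is that the coupling assumption decouples future innovations from past adjacencies but does not force $\eps_{t-j}$ to be independent of $B_{t,j}$, because $B_{t,j}$ involves $Ad_{t-j},\ldots,Ad_{t-1}$ and the earliest of these need not be independent of $\eps_{t-j}$. This blocks the clean conditional computation $E\|B_{t,j}\eps_{t-j}\|_2^2=E\operatorname{tr}(B_{t,j}^\top B_{t,j}\Sigma_\eps)$, but the pathwise operator-norm bound is realization-wise deterministic and therefore insensitive to this dependence; everything else reduces to routine uses of Fubini and the stationarity of $(\Ad,\eps)$.
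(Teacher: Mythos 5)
Your proposal follows essentially the same route as the paper's proof: pass to the stacked NAR$(1)$ representation, iterate it to an MA$(\infty)$ series, bound $\|B_{t,j}\|_2$ pathwise by $\|\,|\tilde A|^j\|_2$ using $\|G_j(\cdot)\|_{\max}\leq 1$, and obtain summability from the determinant condition via the companion-matrix correspondence under (a) or from the spectral-radius hypothesis under (b). Your extra observations — in particular that $\eps_{t-j}$ need not be independent of $B_{t,j}$ but that the pathwise operator-norm bound renders this harmless — are correct and simply make explicit what the paper leaves implicit.
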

The autocovariance as well as the mean of $X$ is affected by the dynamic behavior of the underlying network. In order to get a better understanding on how the dynamic dependency of the network affects the time series, the following Lemma~\ref{lem1.DSLP} presents the autocovariance structure in the more simple case that $\{Ad_t\}$ and $\{\eps_t\}$ are mutually independent. Note that Assumption~\ref{ass.stat} implies for model \eqref{GNAR.eq} the conditions~\eqref{ass.5a},\eqref{ass.5b} of the following Lemma.

\begin{lem}\label{lem1.DSLP} Let $\{X_t\}$ be a generalized network linear process as defined in $(\ref{def.NLP})$. If 
\begin{enumerate}[i)]
\item $\{Ad_t\}$ and $\{\eps_t\}$ are mutually independent,
\item $\sum_{s=0}^\infty \left( E | B_{j,s+l} \Sigma_\eps B_{0,s}^\top | \right)+ \sum_{s_1=0}^\infty \sum_{s_2=0}^\infty |\cov \left(B_{j,s_1} \mu, B_{l,s_2} \mu \right)| <\infty$ for all $j,l \in \N$ (component-wise)   \label{ass.5a}
\item $\sum_{s=0}^\infty \left( E | B_{0,s}|\right) < \infty$ (component-wise), \label {ass.5b}
\end{enumerate}
hold, then $X_{t} = \lim_{q \to \infty }\sum_{j=0}^q B_{t,j} \eps_{t-j}$ converges component-wise in the $L_2$-limit, and the autocovariance function  is given by $\Gamma_X(h)=\Gamma_X(-h)^\top$, and 
\begin{align}
\Gamma_X(h)=\sum_{s=0}^\infty E \left(B_{h,s+h} \Sigma_\eps B_{0,s}^\top\right) +\sum_{j=0}^\infty \sum_{s=0}^\infty \cov\left( B_{h,j} \mu, B_{0,s} \mu\right), h\geq 0, \label{acf.dslp}
\end{align}
and the mean function by $\mu_X = \sum_{j=0}^\infty E B_{0,j} \mu$.
\end{lem}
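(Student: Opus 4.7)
The plan is to exploit the independence assumption (i) by conditioning on the $\sigma$-algebra generated by the network process $\Ad$; under this conditioning each $B_{t,j}=f_j(Ad_{t-1},\ldots,Ad_{t-j})$ becomes non-random, so the analysis reduces to orthogonality computations for the i.i.d.\ sequence $\{\eps_t\}$. Writing $\eps_{t-j}=\mu+\tilde\eps_{t-j}$ with $\tilde\eps$ centred and i.i.d.\ (and independent of $\Ad$), the $q$-th partial sum splits into a mean part $M_t^{(q)}=\sum_{j=0}^q B_{t,j}\mu$ and a stochastic part $Z_t^{(q)}=\sum_{j=0}^q B_{t,j}\tilde\eps_{t-j}$, which I handle separately.

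For $M_t^{(q)}$, componentwise $L_1$-convergence is immediate from (iii) via $|EB_{t,j}\mu|\leq (E|B_{t,j}|)|\mu|$, and $L_2$-convergence follows by bounding $\var(e_r^\top M_t^{(q)})\leq\sum_{j,k}|\cov(B_{t,j}\mu,B_{t,k}\mu)|$, which is controlled componentwise by (ii). For $Z_t^{(q)}$, conditioning on $\Ad$ annihilates all cross terms and yields
\begin{align*}
E\Bigl(e_r^\top(Z_t^{(q_2)}-Z_t^{(q_1)})\Bigr)^2=\sum_{j=q_1+1}^{q_2} E\bigl[e_r^\top B_{t,j}\Sigma_\eps B_{t,j}^\top e_r\bigr],
\end{align*}
which is summable by (ii) taken at $j=l=0$, together with stationarity of $\{B_{t,\cdot}\}$ inherited from $\Ad$. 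Hence the partial sums are Cauchy componentwise in $L_2$, establishing existence of $X_t$, and the mean formula $\mu_X=\sum_{j=0}^\infty EB_{0,j}\mu$ follows from dominated convergence with (iii) as dominator.

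For the autocovariance at $h\geq 0$, the same conditioning uses $E[\tilde\eps_{t+h-j}\tilde\eps_{t-k}^\top]=\Sigma_\eps\ind(j=h+k)$ and the vanishing of mixed mean$\times$centred cross terms to give
\begin{align*}
E\bigl[X_{t+h}X_t^\top\bigm|\Ad\bigr]=\sum_{k=0}^\infty B_{t+h,h+k}\Sigma_\eps B_{t,k}^\top+\Bigl(\sum_{j=0}^\infty B_{t+h,j}\mu\Bigr)\Bigl(\sum_{k=0}^\infty B_{t,k}\mu\Bigr)^\top.
\end{align*}
Taking unconditional expectations, subtracting $\mu_X\mu_X^\top$, and shifting time indices by stationarity of $\{B_{t,\cdot}\}$ yields the stated expression for $\Gamma_X(h)$; the identity $\Gamma_X(-h)=\Gamma_X(h)^\top$ is automatic from stationarity of $\{X_t\}$. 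The main technical obstacle will be the Fubini/DCT bookkeeping for the double-sum mean piece, but the decomposition $E[AB^\top]=\cov(A,B)+(EA)(EB)^\top$ combined with $|EB_{t,j}\mu|\leq(E|B_{t,j}|)|\mu|$ reduces everything to the absolute-summability bounds already packaged in (ii) and (iii), so no further moment hypothesis is needed.
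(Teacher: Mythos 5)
Your proposal is correct and follows essentially the same route as the paper: both rely on the independence of $\{\eps_t\}$ from $\Ad$ to factor the expectations, and your decomposition $\eps_{t-j}=\mu+\tilde\eps_{t-j}$ is the same algebraic splitting the paper performs by adding and subtracting $E\left(B_{t+h,j}\mu\mu^\top B_{t,s}^\top\right)$ inside the double sum, yielding the noise term $\sum_s E\left(B_{h,s+h}\Sigma_\eps B_{0,s}^\top\right)$ and the network-covariance term $\sum_{j,s}\cov\left(B_{h,j}\mu,B_{0,s}\mu\right)$. The only difference is that you spell out the $L_2$-Cauchy argument for existence (which the paper simply asserts follows from conditions (ii) and (iii)), and your identification of which instances of (ii) control which tails is accurate.
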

The latter term of the autocovariance function, $\sum_{j=0}^\infty \sum_{s=0}^\infty \cov\left( B_{h,j} \mu, B_{0,s} \mu\right)$, comes only into play for non-centered innovations and is driven by the linear dependency structure of the network. Consequently, it can be seen that the linear dependency of the network directly influences the linear dependency of the process $\Xb$. As a consequence, even a network moving average process of order $q$ may possess a nonzero autocovariance for lags higher than $q$. In order to better understand this, consider a small toy example with three vertices and two possible edges, $(1,3)$ and $(2,3)$, and only one edge is present at a time. Let $\{ e_t, t \in \Z\}$ be i.i.d. random variables with uniform distribution on $[0,1]$, i.e., $e_1 \sim \mathcal{U}[0,1]$. Which edge is present at time $t$ is given by the random variables $(e_t)$ in the following way. If $Ad_{t-1;13}=1$, then if $e_t>0.05$, then $Ad_{t;13}=1$ else $Ad_{t;23}=1$. If $Ad_{t-1;13}=0$ (that means $Ad_{t-1;23}=1$), then if $e_t>0.95$, then $Ad_{t;13}=1$ else $Ad_{t;23}=1$. Consequently, we flip in this network between the edges $(1,3)$ and  $(2,3)$, and if one edge is present at time $t$, it is more likely (with probability $0.95$) that it is present at time $t+1$ than flipping to the other edge. We have dependency between different time points as well as between edges. $\eps_1 \sim \mathcal{N}( \mu , I_3),$ and $ \mu=(10,-10,0)^\top$. Let $\Xb$ be given by 
\begin{align}
X_{t}= Ad_{t-1}^\top \eps_{t-1} + \eps_{t}= Ad_{t-1}^\top X_{t-1} + \eps_{t}, \text{ where } Ad_\cdot^\top=
\begin{pmatrix}
0 & 0 & 0 \\
0 & 0 & 0 \\
* & * & 0 
\end{pmatrix}. \label{example.flip}
\end{align} Thus, $\Xb$ is a network moving average process of order $1$, and the influence goes in the direction of the edges. Since no edge goes into vertex $1$ or $2$, $\{X_{t;1}, t \in \Z\}$ and $\{X_{t;2}, t \in \Z\}$ are white noise. This can be also seen in the autocovariance function, which is displayed in its two parts in Figure \ref{ex.flip.acf}. The left-hand side figures display the first part; $\sum_{s=0}^\infty E \left(B_{h,s+h} \Sigma_\eps B_{0,s}^\top\right)$. The dependency of the network has no influence on the first part, thus, this part would remain the same if a static model was considered where $\Ad$ is replaced by its expected value. That is why  this part of the autocovariance function has the structure one expects from a vector moving average (VMA) process of order $1$. The right-hand side figures display the second part of the autocovariance function in Lemma~\ref{lem1.DSLP}, $\sum_{j=0}^\infty \sum_{s=0}^\infty \cov\left( B_{h,j} \mu, B_{0,s} \mu\right)$.  As already mentioned, this part is completely driven by the linear dependence structure of the network. For the two edges, we have the following linear dependency: $\cov(Ad_{t+h;23},Ad_{t;23})=\cov(Ad_{t+h;13},Ad_{t;13})=0.9^h/4, \cov(Ad_{t+h;23},Ad_{t;13})=\cov(Ad_{t+h;13},Ad_{t;23})=-0.9^h/4$.
This explains the geometric decay in the autocovariance function of the third component of $\Xb$, whereas the magnitude of the autocovariance function of the third component is mainly given by the difference of the  mean of the innovations of the first two components. Hence, a greater difference of the innovations mean makes it harder to identify the linear dependency, which means the first part of the autocovariance function in Lemma~\ref{lem1.DSLP}, between components $1$ and $3$, or $2$ and $3$ respectively. In this particular example with mean $ \mu=(10,-10,0)^\top$, no linear dependency between the different components can be identified for moderate sample sizes. A sample autocorrelation function as well as a realization of the third component of $\Xb$ is displayed in Figure \ref{ex.flip.real} for a sample size $n=500$. Instead, looking from the perspective of the classical time series analysis, the sample autocorrelation function looks like we have three uncorrelated components where the first two components are white noises and the third could be an AR$(1)$ process. Hence, this examples gives two important aspects to keep in mind: firstly, the linear dependency of the network can influence the linear dependency of the time series directly. Secondly, the problem that the autocovariance function may not suffice to identify network linear processes such as network autoregressive models should be kept in mind.

\begin{figure}[H]
\includegraphics[width=0.495\textwidth]{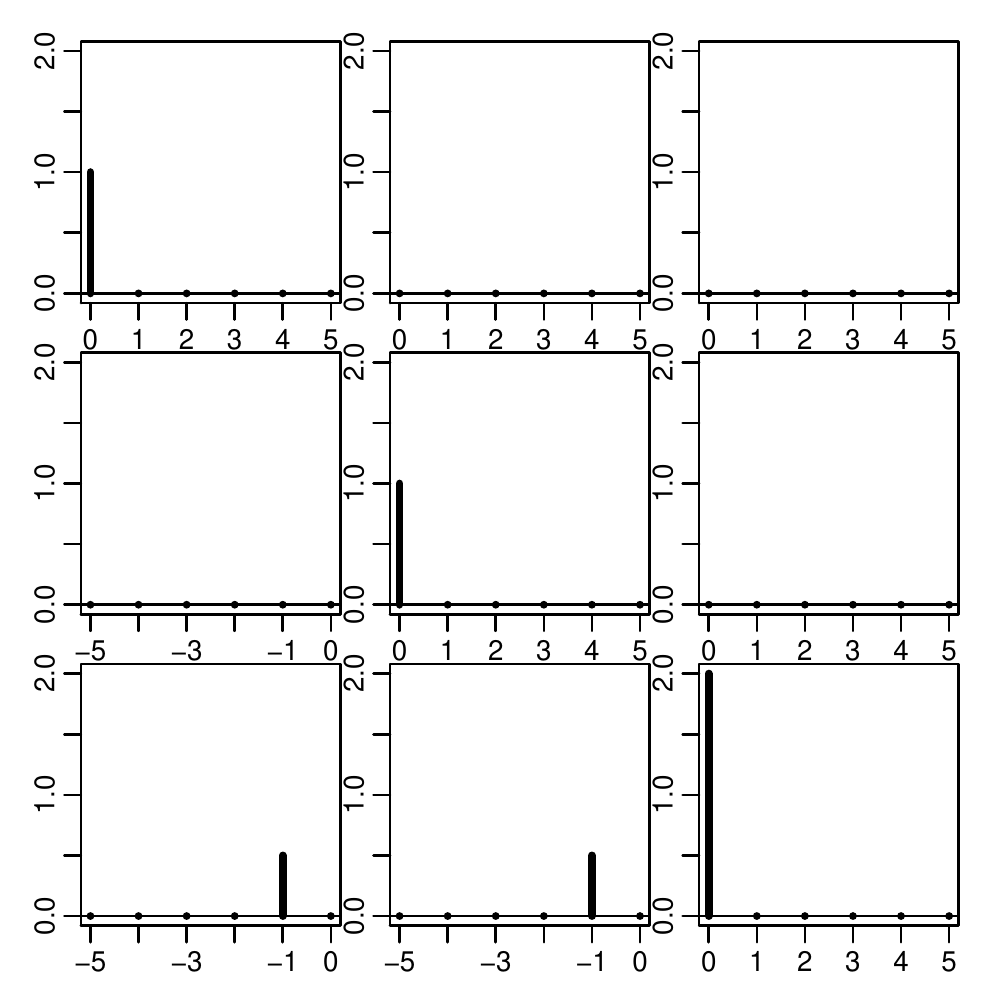}
\hfill
\includegraphics[width=0.495\textwidth]{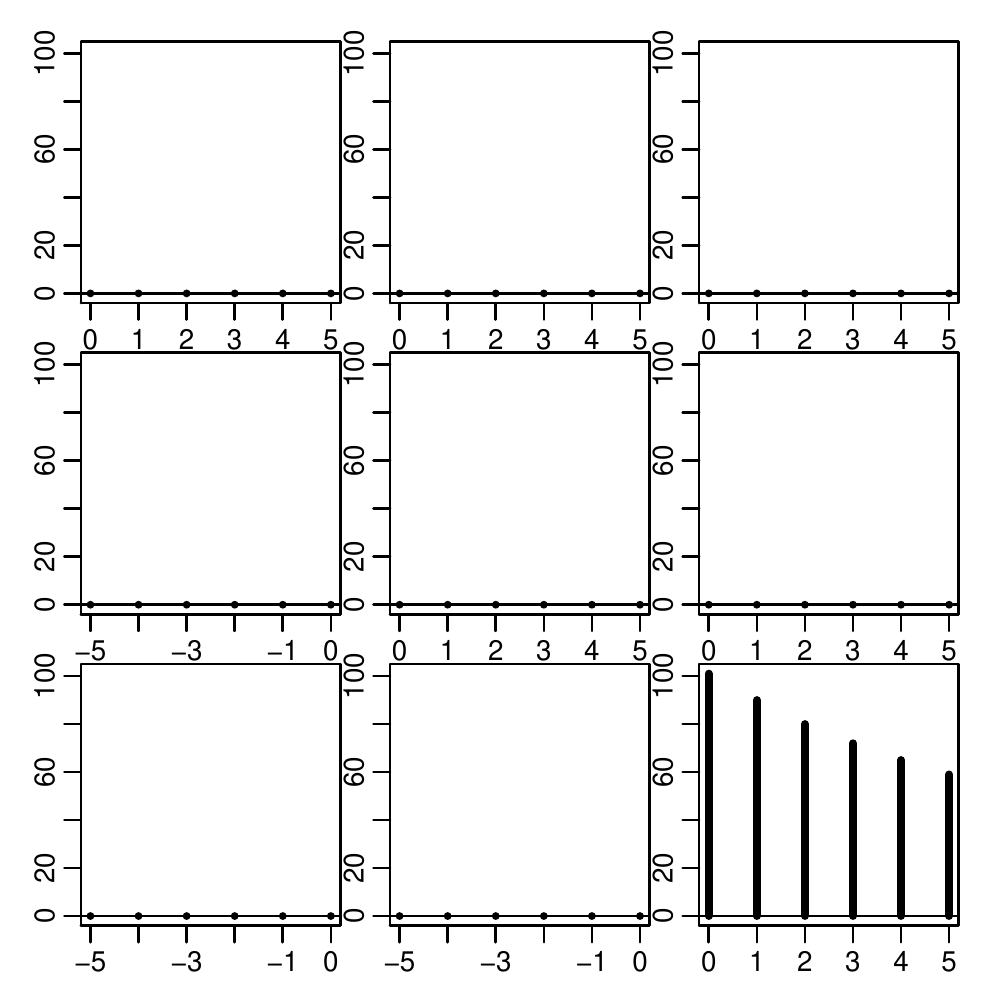}
\caption[Autocovariance function $(\ref{acf.dslp})$ of process $(\ref{example.flip})$]{Left-hand-side ($\sum_{s=0}^\infty E(B_{h,s+h} \Sigma_\eps B_{0,s}^\top)$; left figure) and right-hand-side ($\sum_{j=0}^\infty \sum_{s=0}^\infty \cov( B_{h,j} \mu, B_{0,s} \mu)$; right figure) of the autocovariance function $(\ref{acf.dslp})$ of process $(\ref{example.flip})$} \label{ex.flip.acf}
\end{figure}

\begin{figure}[H]
\includegraphics[width=0.495\textwidth]{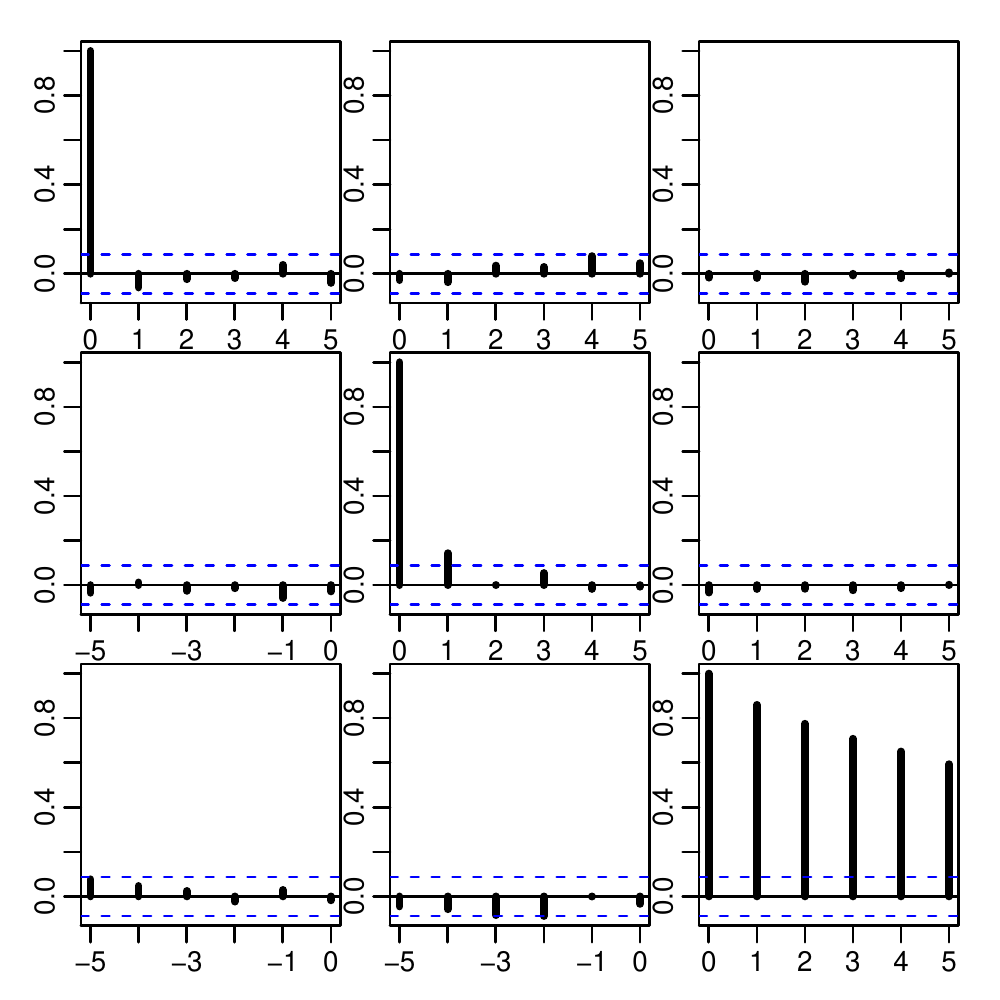}
\hfill
\includegraphics[width=0.495\textwidth]{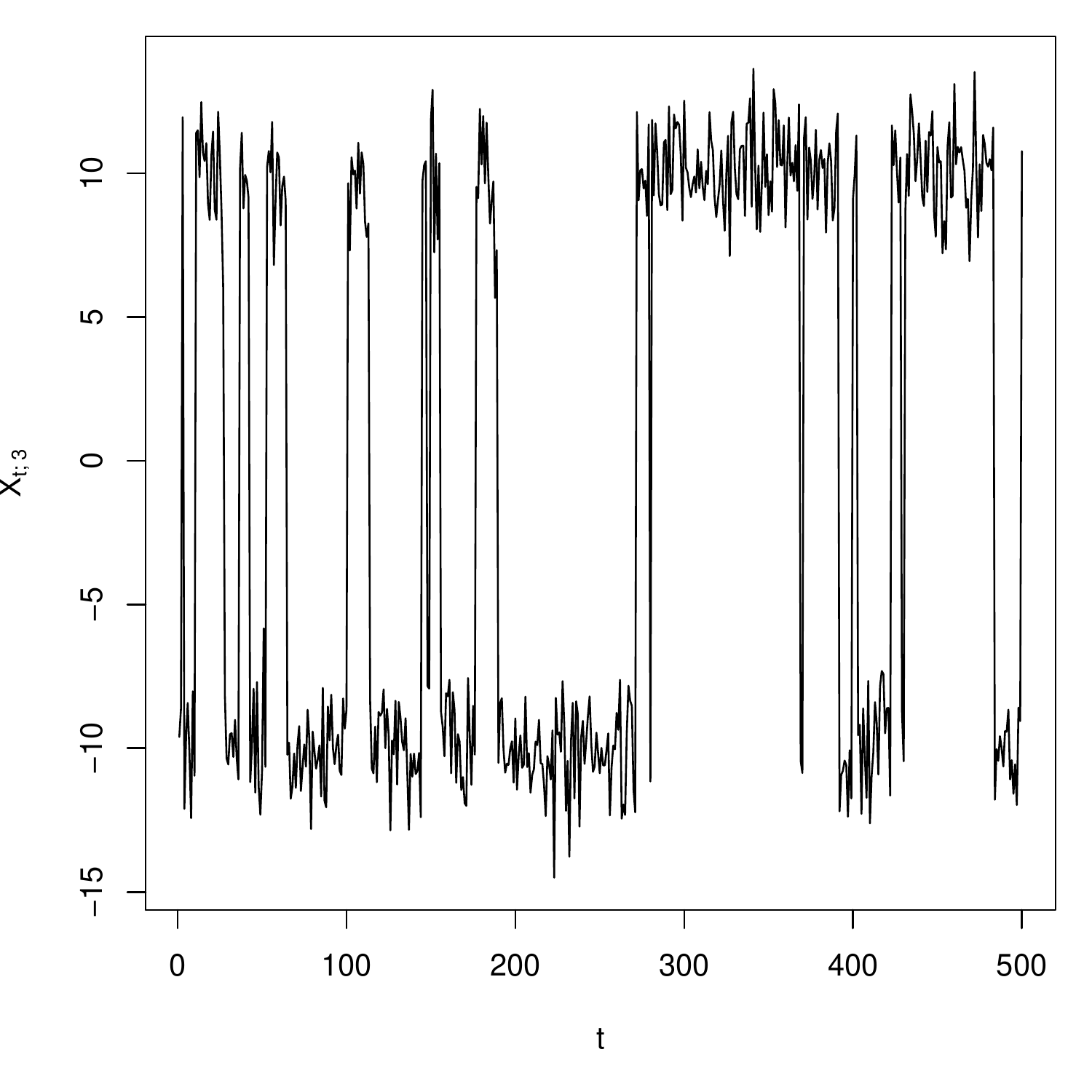}
\caption[Sample ACF and realization of process $(\ref{example.flip})$]{Sample autocorrelation function and realization of the third component of process $(\ref{example.flip})$, based on $n=500$} \label{ex.flip.real}
\end{figure}

\section{Statistical results}
\label{stat.result}
In this section we focus on the estimation of model \eqref{GNAR.eq}. As seen in the example in Section \ref{sec.3.2}, the autocovariance function is not helpful to identify such models. Note further that even if $\Ad$ is Markovian, an NAR$(1)$ process can generally not be written as a Hidden Markov model (HMM). This is because $\Xb$ given $\Ad$ is not a sequence of conditionally independent variables and cannot be written as a noisy functional of $Ad_{t-1}$ only, which is required by a HMM, see among others \cite{bickel1998asymptotic} for details on HMM. Consequently, techniques used for HMM cannot be applied here.
Instead, the same setting as in \citet{knight2016modelling,knight2019generalised,zhu2017network} is considered. Thus, the process $\Xb$ as well as the network $\Ad$ is observed leading to observations $X_{1},\dots,X_{n}$ and $Ad_1,\dots,Ad_{n-1}$. In such a setting, the consistency of a least square estimate as well as asymptotic normality for model \eqref{GNAR.eq} is shown in the first subsection. The results are presented under general assumptions and the asymptotic setting that $d$ is fixed and $n\to \infty$. Later on, we give dependence measure conditions for the underlying dynamic network such that the general assumptions hold. In the second subsection, a simplified version of model \eqref{GNAR.eq} is considered. This simplified model is suited for high-dimensional cases, and consequently, the theoretical estimation results are presented nonasymptotically.

\subsection{Network autoregressive models} \label{sec.NAR.stat}
Networks usually come together with some form of sparsity, see among others Section 3.5 in \cite{Kolaczyk:2009}. This means that a vertex has only a connection to a limited number of other vertices and $E \sign(Ad_1)$ could have some zero entries or  might even be a sparse matrix. Thus, $E \sign(G_j(Ad_1)), j=1,\dots,p$ might be sparse matrices as well. That means the number of parameters of model \eqref{GNAR.eq} is given by $\sum_{j=1}^p\|\veco(E |\sign(G_j(Ad_1))|)\|_0\leq pd^2$ and depends on the sparsity of the underlying network. Let $\mathcal{I}{(r)}=\{\tilde i=i+(j-1)d, i=1,\dots,d,j=1,\dots,p :  1/n\sum_{t=p+1}^n e_r^\top| G_j(Ad_{t-j})|e_i>0\}, r=1,\dots,d $ be a set of indices corresponding to the non-zero coefficients of $\sum_{t=p+1}^n|(G_1(Ad_{t-1}):\dots:G_p(Ad_{t-p}))|$ and $\mathcal{I}{(r)}_E=\{\tilde i=i+(j-1)d, i=1,\dots,d,j=1,\dots,p :   E e_r^\top| G_j(Ad_{1})|e_i>0\}, r=1,\dots,d$ is the corresponding population quantity. Note that $\mathcal{I}{(r)} \subseteq  \mathcal{I}{(r)}_E$ for all $t$. Only parameters corresponding to indices of the set $\mathcal{I}{(r)}_E$ are well defined in the sense that they have an influence on the process. We set the other parameters, meaning those corresponding to indices of the set $(\mathcal{I}{(r)}_E)^C$, to zero. Recall that  $I_{d;-I}\in \R^{(d-|I|)\times d}$ denotes a $d$-dimensional identity matrix without rows $i \in I$ and $I_{d;I}=I_{d;-I^C}.$ Let for $r=1,\dots,d,$ $w_r=I_{dp;\mathcal{I}{(r)}}(e_r^\top A_1, \dots,e_r^\top A_p)^\top$ and $$Y_{t-1}^{(r)}=I_{dp;\mathcal{I}{(r)}}( (e_r^\top G_1(Ad_{t-1})) \Mm X_{t-1},\dots,(e_r^\top G_p(Ad_{t-p})) \Mm X_{t-p})^\top.$$ Then for $t=p+1,\dots,n$ and $r=1,\dots,d,$ we can write \eqref{GNAR.eq} as
\begin{align}\label{GNAR.comp}
X_{t;r}= w_r^\top Y_{t-1}^{(r)}+\eps_{t;r}.
\end{align}
Thus, $w_r$ and $\mu_r=E \eps_{t;r}$ can be estimated by using the following least square approach given by 
$
\argmin_{\hat w_r,\hat \mu_r} \sum_{t=p+1}^n (X_{t;r}- \hat w_r^\top Y_{t_1}-\hat \mu_r)^2.
$
For component $r$, this leads to the following linear system:
\begin{align} \label{eq.sys.est}
\begin{pmatrix}
\sum_{t=p+1}^n X_{t;r} \Yr_{t-1}- \frac{1}{n-p} \sum_{t_1,t_2=p+1}^n \Yr_{t_1-1} X_{t_2;r} \\
\sum_{t=p+1}^n X_{t;r}
\end{pmatrix}
= \nonumber\\
\begin{pmatrix}
0 &  \sum_{t=p}^{n-1} \Yr_t (\Yr_t)^\top - \frac{1}{n-p} \sum_{t_1,t_2=p}^{n-1} (\Yr_{t_1})  (\Yr_{t_2})^\top \\
n-p &  \sum_{t=p}^{n-1} (\Yr_t)^\top
\end{pmatrix}
\begin{pmatrix}
\hat \mu_{r} \\
\hat w_r
\end{pmatrix} 
.
\end{align}

We show the consistency of the least square estimators under general assumptions. Later on, we specify a dependence concept for the underlying network which ensures these general assumptions, see Lemma~\ref{lem.physical}.

\begin{asp}\label{ass.conv} For all $r=1,\dots,d,$ we have, as $n\to \infty$, 
\begin{enumerate}
\item $E Y_1^{(r)}=\mu_{\Yr}$,
$
1/n \sum_{s=1}^n \Yr_s =\mu_{\Yr}+O_P(1/\sqrt n),
$
and $\Gamma_{\Yr}(0)=\var(Y_1^{r})$, $\| \Gamma_{\Yr}(0)\|_2$, $\| \Gamma_{\Yr}(0)^{-1}\|_2<\infty,
$
$$
\frac{1}{n} \sum_{t=1}^n (\Yr_t-\mu_{\Yr})(\Yr_t-\mu_{\Yr})^\top
= \var(\Yr_1)+O_P(1/\sqrt n).$$
\item 
$$
\frac{1}{n} \sum_{t=1}^n \eps_{t;r}= \mu_r+O_P(1/\sqrt n),
$$
and
$$
\frac{1}{n} \sum_{t=1}^n  (\Yr_t-\mu_{\Yr})(\eps_{t+1;r}-\mu_r) = \cov(\Yr_0,\eps_1)+O_P(1/\sqrt n)=O_P(1/\sqrt n).
$$
\item For all $s=1,\dots,d, E|\Yr_{0;s}|^4<\infty$ and $E| \eps_{0;s}|^4<\infty$.
\end{enumerate}
\end{asp}

\begin{asp} \label{ass.norm} For all $r=1,\dots,d,$ we have, as $n\to \infty$,
$$
\frac{1}{\sqrt{n}} \sum_{t=1}^n (\eps_{t;r}-\mu_r) (\Yr_{t-1}-\mu_{\Yr})^\top \overset{D}{\to} \mathcal{N}(0,\Sigma_\eps),
$$
where $\Sigma_\eps=\var(\eps_{0;r}) \Gamma_{\Yr}(0)=\Sigma_{\eps;rr}\Gamma_{\Yr}(0)$.
\end{asp}

\begin{thm}\label{thm.cons.NAR}
Under Assumption~\ref{ass.conv} we have for $r=1,\dots,d,$  $\hat \mu_r=\mu_r + O_P(1/\sqrt{n}), \hat w_r=w_r+O_P(1/\sqrt{n})$. If additionally Assumption~\ref{ass.norm} holds, we have, as $n \to \infty$,
$$
\sqrt{n} \begin{pmatrix}
\hat \mu_r - \mu_r\\
\hat w_r - w_r
\end{pmatrix} \overset{D}{\to} \mathcal{N}\left(0, \Sigma_{\eps;rr} \begin{pmatrix}
1+\mu_{\Yr}^\top \Gamma_{\Yr}(0)^{-1} \mu_{\Yr}  & 0 \\
0 & \Gamma_{\Yr}(0)^{-1}
\end{pmatrix}\right).
$$
Furthermore, we have for $r,s=1,\dots,p,$ as $n \to \infty$,
$$n\cov(\mu_r,\mu_s)\to \Sigma_{\eps;rs}(1+ \mu_{\Yr}^\top \Gamma_{\Yr}(0)^{-1} \cov(\Yr_1,Y_1^{(s)}) \Gamma_{Y^{(s)}}(0)^{-1} \mu_{Y^{(s)}}),$$ 
$$n \cov(w_r,w_s)\to \Sigma_{\eps;rs} \Gamma_{\Yr}(0)^{-1} \cov(\Yr_1,Y_1^{(s)}) \Gamma_{Y^{(s)}}(0)^{-1}$$ and $n \cov(w_r, \mu_s)\to 0.$
\end{thm}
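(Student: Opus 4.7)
The plan is to recognize \eqref{eq.sys.est} as a centered ordinary least squares problem with regressor $\Yr_{t-1}$, then read off consistency and the CLT from Assumptions~\ref{ass.conv} and \ref{ass.norm} via Slutsky and the continuous mapping theorem, and finally compute the asymptotic moments using the independence built into Definition~\ref{def.GNLP} together with the causal representation from Lemma~\ref{lem.solution}.

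First I would simplify the normal equations. The second row of \eqref{eq.sys.est} gives $\hat\mu_r=\bar X_r-\bar\Yr{}^\top\hat w_r$ where $\bar X_r=(n-p)^{-1}\sum_t X_{t;r}$ and $\bar\Yr=(n-p)^{-1}\sum_t\Yr_{t-1}$. Noting that $(n-p)^{-1}\sum_{t_1,t_2}\Yr_{t_1-1}X_{t_2;r}=(n-p)\bar\Yr\bar X_r$ and analogously for the $\Yr\Yr{}^\top$ double sum, the first row collapses into the centered normal equation and yields
\begin{equation*}
\hat w_r-w_r=\Big[\tfrac{1}{n-p}\sum_t(\Yr_{t-1}-\bar\Yr)(\Yr_{t-1}-\bar\Yr)^\top\Big]^{-1}\tfrac{1}{n-p}\sum_t(\Yr_{t-1}-\bar\Yr)(\eps_{t;r}-\bar\eps_r),
\end{equation*}
together with $\hat\mu_r-\mu_r=(\bar\eps_r-\mu_r)-\bar\Yr{}^\top(\hat w_r-w_r)$. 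Assumption~\ref{ass.conv} gives the empirical centered Gram matrix as $\Gamma_{\Yr}(0)+O_P(1/\sqrt n)$ and the centered cross term as $O_P(1/\sqrt n)$ (the product of the two centering corrections $(\bar\Yr-\mu_{\Yr})(\bar\eps_r-\mu_r)$ being of order $O_P(1/n)$ and hence negligible). Since $\|\Gamma_{\Yr}(0)^{-1}\|_2<\infty$, Slutsky produces $\hat w_r-w_r=O_P(1/\sqrt n)$, and then the intercept identity gives the same rate for $\hat\mu_r-\mu_r$.

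For asymptotic normality, Assumption~\ref{ass.norm} combined with the preceding plug-in delivers
\begin{equation*}
\sqrt n(\hat w_r-w_r)=\Gamma_{\Yr}(0)^{-1}\,\tfrac{1}{\sqrt n}\sum_t(\Yr_{t-1}-\mu_{\Yr})(\eps_{t;r}-\mu_r)+o_P(1)\overset{D}\to\mathcal N\!\big(0,\Sigma_{\eps;rr}\Gamma_{\Yr}(0)^{-1}\big).
\end{equation*}
The joint limit with $\sqrt n\,\bar{\tilde\eps}_r:=\tfrac{1}{\sqrt n}\sum_t(\eps_{t;r}-\mu_r)$ follows by a Cram\'er--Wold reduction: an arbitrary linear combination reduces to $\tfrac{1}{\sqrt n}\sum_t(a+b^\top(\Yr_{t-1}-\mu_{\Yr}))(\eps_{t;r}-\mu_r)$, which is a one-dimensional version of the quantity controlled by Assumption~\ref{ass.norm}. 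The independence $\{\eps_s:s>t\}$ versus $\{\Ad_s:s\le t\}$ in Definition~\ref{def.GNLP}, together with the causal $\text{MA}(\infty)$ solution of Lemma~\ref{lem.solution}, makes $\eps_{t;r}$ independent of $\Yr_{t-1}$; a per-summand expansion then gives $\cov(\eps_{s;r}-\mu_r,(\Yr_{t-1}-\mu_{\Yr})(\eps_{t;r}-\mu_r))=0$ for every $s,t$, so $\cov(\sqrt n\,\bar{\tilde\eps}_r,\sqrt n(\hat w_r-w_r))\to 0$. Substituting into $\sqrt n(\hat\mu_r-\mu_r)=\sqrt n\,\bar{\tilde\eps}_r-\mu_{\Yr}^\top\sqrt n(\hat w_r-w_r)+o_P(1)$ gives the stated variance $\Sigma_{\eps;rr}(1+\mu_{\Yr}^\top\Gamma_{\Yr}(0)^{-1}\mu_{\Yr})$ for $\hat\mu_r$. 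The cross-component formulas for $n\cov(\hat w_r,\hat w_s)$ and $n\cov(\hat\mu_r,\hat\mu_s)$ follow in the same way from the identity $\cov((\Yr_{t-1}-\mu_{\Yr})(\eps_{t;r}-\mu_r),(Y^{(s)}_{t'-1}-\mu_{Y^{(s)}})(\eps_{t';s}-\mu_s))=\Sigma_{\eps;rs}\cov(\Yr_1,Y_1^{(s)})$ when $t=t'$ and $0$ otherwise, after sandwiching with $\Gamma_{\Yr}(0)^{-1}$ and $\Gamma_{Y^{(s)}}(0)^{-1}$.

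The OLS algebra above is routine once the normal equations are rearranged; the genuine difficulty, addressed separately by Lemma~\ref{lem.physical}, is verifying the high-level Assumptions~\ref{ass.conv}--\ref{ass.norm} in this doubly stochastic setting, since $\Yr_{t-1}$ depends nonlinearly on the dynamic adjacency process and standard VAR-type arguments based on a fixed linear representation do not apply.
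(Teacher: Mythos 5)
Your proposal is correct and follows essentially the same route as the paper's proof: insert $X_{t;r}=w_r^\top Y^{(r)}_{t-1}+\eps_{t;r}$ into the normal equations \eqref{eq.sys.est}, identify the centered Gram matrix and cross term, and invoke Assumptions~\ref{ass.conv} and \ref{ass.norm} together with Slutsky, using the intercept identity $\hat\mu_r-\mu_r=(\bar\eps_r-\mu_r)-\bar Y^{(r)\top}(\hat w_r-w_r)$ for the mean. You actually supply more detail than the paper on the joint limit (Cram\'er--Wold step, the vanishing cross-covariance with $\bar\eps_r$, and the cross-component covariance formulas), which the paper's proof leaves implicit.
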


The results of Theorem \ref{thm.cons.NAR} can be used to forecast the process $\Xb$. If $Ad_n$ is observed, then let $Y_{n}^{(r)}=I_{dp;\mathcal{I}{(r)}}( (e_r^\top G_1(Ad_{n})) \Mm X_{n},\dots,(e_r^\top G_p(Ad_{n-p+1})) \Mm X_{n-p+1})^\top$ and a one-step ahead forecast of 
$X_{n+1}$ is given by 
${\hat X}^{(1)}_{n+1;r}=\hat w_r Y_n^{(r)}+\hat \mu_r,r=1,\dots,d.$ Since $\{\eps_t\}$ is i.i.d. and $\hat w_r, \hat \mu_r$ are $\sqrt n$ consistent, we have $E(X_{n+1;r}-\hat X^{(1)}_{n+1:r})^2=e_r^\top\Sigma_\eps e_r+O(1/\sqrt{n})$. If $Ad_n$ is not observed, $Ad_n$ itself needs to be predicted first. This could be done by fitting a dynamic network model to $Ad_1,\dots,Ad_{n-1}$ and using this model to predict $Ad_n$. An $h$-step ahead forecast can be done recursively, which means performing a one-step ahead forecast based on the observations and the results of the $h-1,\dots,1$-step ahead forecasts. 

Assumption~\ref{ass.conv} mainly requires a $\sqrt{n}$ conversion rate of the first and second sample moments of $\Xb$. An absolutely summable autocovariance function of $\Xb$ is sufficient for the convergence of the first sample moments. As pointed out in Lemma~\ref{lem.solution} and Lemma~\ref{lem1.DSLP}, the autocovariance of $\Xb$ depends on the dependency structure of $\Ad$. For simplicity, consider the following network moving average process $X_t= Ad_{t-1} \eps_{t-1} +\eps_t, E \eps_1=\mu, \var \eps_1=\Sigma_\eps$  where $\Ad$ and $\{\eps_t\}$ are independent. Following Lemma~\ref{lem1.DSLP}, we obtain
$\sum_{h=0}^\infty \Gamma_X(h)=(\Sigma_\eps+E(Ad_1 \Sigma_\eps Ad_1^\top)+(E Ad_1 \Sigma_\eps)+\sum_{h=0}^\infty\cov(Ad_{h}\mu,Ad_0\mu)$. Hence, even for this simple moving average process, a summmable autocovariance function can be obtained only if $\Ad$ possesses some sort of short-range dependence. Several general dependency concepts exists which could describe a short-range dependency structure such as mixing\cite{bradley2007}, some weak dependency concepts\cite{doukhan1999new} or physical dependence\cite{wu2005nonlinear,wu2011asymptotic,wu2011gaussian,liu2009strong}. Since the concept of physical dependence works well also in the high-dimensional case, see among others \cite{zhang2017gaussian,zhang2018gaussian}, this concept is used to quantify the dependence structure of $\Ad$. 

To elaborate, let $\{\xi_t\}$ be a sequence of i.i.d. random vectors of dimension $\tilde d$ such that $\{\Xi_t=(\eps_t,\xi_t)\}$ is also an i.i.d. sequence. Furthermore, let $Ad_t=H(\Xi_t,\Xi_{t-1},\dots)$, where $H$ is some measurable function to $[-1,1]^{d\times d}$.  Denote by $\Xi_t^\prime$ an i.i.d. copy of $\Xi_t$ and let for some $q>0$
$\delta_q(\Ad,j)=\max_{r,i=1,\dots,d} \linebreak\|e_r^\top(Ad_j-Ad_j^*) e_i\|_{E,q},$ where $Ad_j^*=H(\Xi_j,\Xi_{j-1},\dots,\Xi_{1},\Xi_0^\prime,\Xi_{-1},\Xi_{-2},\dots)$ is a coupled version of $Ad_j$ with $\Xi_0$ in the latter being replaced by $\Xi_0^\prime$. Since $H$ is a function to $[-1,1]^{d\times d}$, $\delta_q(\Ad,j)<\infty $  for $q\geq 1$. Furthermore, let $\Delta_q(\Ad)=\sum_{j=0}^\infty \delta_q(\Ad,j).$ The process $\Ad$ is denoted as $q$-stable if $\Delta_q(\Ad)<\infty$. This property still holds for some nonlinear transformations, see \cite{wu2005nonlinear,wu2006unit}. E.g. consider a polynomial transformation given by  $\{g(Ad_t)=Ad_t^k\}$. Note that for some matrices $A,B$ we have $A^k-B^k=\sum_{s=0}^{k-1} A^s(A-B) B^{j-1-s}$. If $\|\Ad\|_\infty \leq C$, then $\delta_{q/2}(\{g(Ad_t)\},j)=\delta_q(\Ad,j) C^{k-1} k$. Without assuming any sparsity, an upper bound is given by $C\leq d$. This dependency concept covers a wide range of processes among them many nonlinear time series, see \cite{wu2005nonlinear,wu2011asymptotic,wu2011gaussian} for examples. Furthermore, this concept includes nonlinear Markov chains, meaning $\Ad$ can be given by $Ad_t=H(Ad_{t-1},\Xi_t)$.  \citet{zhang2018gaussian} pointed out that a stable process is obtained if $H$ possesses some form of Lipschitz-continuity. Then, $\delta_q(\Ad,j)=O(\rho^j)$ for some $\rho \in (0,1)$, see Example 2.4 in \cite{zhang2018gaussian} or Example 2.1 in \cite{chen2013covariance} for details. A stable vector autoregressive process possesses also such geometrically decaying physical dependence coefficients, see among others Example 2.2 in \cite{chen2013covariance}. Note that many dynamic network models, e.g. Temporal ERGMs \cite{hanneke2010discrete}, are Markovian.

\begin{lem}\label{lem.physical}
If Assumption~\ref{ass.stat}a) holds, 
$\Ad_G=\{\widetilde{G(Ad}_t)\}$ is $2q$-stable, and $\max_r \|\eps_{0;r}\|_{E,2q}<\infty$ for some $q\geq 1$, then $\Xb$ is $q$-stable. If the above conditions hold for $q\geq 4$, then Assumption~\ref{ass.conv} and \ref{ass.norm} hold.
\end{lem}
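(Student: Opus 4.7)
The plan is to work from the causal MA$(\infty)$ representation $X_t=\sum_{j=0}^\infty B_{t,j}\eps_{t-j}$ in Lemma~\ref{lem.solution} and translate the physical dependence of $\Ad_G$ and $\{\eps_t\}$ into that of $\Xb$. Under Assumption~\ref{ass.stat}a), the block companion matrix built from $|A_1|,\dots,|A_p|$ has spectral radius strictly less than $1$, so $\||\tilde A|^m\|_2\le C\rho^m$ for some $\rho\in(0,1)$. Combined with $\|G_j(\cdot)\|_{\max}\le 1$, each factor in the product defining $B_{t,j}$ is entrywise dominated by $|\tilde A|$, which, by the elementary monotonicity of entrywise absolute values under products together with $\|M\|_2\le\||M|\|_2$, yields the geometric bound $\|B_{t,j}\|_2=O(\rho^j)$. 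This decay is what makes every series encountered below convergent.

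For the first claim I couple by replacing $\Xi_0$ with an independent copy $\Xi_0'$ and compare $X_j$ with $X_j^*$. Splitting at $k=j$,
\[X_j-X_j^*=\sum_{k\neq j}(B_{j,k}-B_{j,k}^*)\eps_{j-k}+(B_{j,j}-B_{j,j}^*)\eps_0+B_{j,j}^*(\eps_0-\eps_0').\]
The matrix difference $B_{j,k}-B_{j,k}^*$ is a length-$k$ telescoping sum whose $u$-th summand is (a product of $u-1$ outer factors on the left) $\cdot(\tilde A\Mm(\widetilde{G(Ad}_{j-u})-\widetilde{G(Ad}_{j-u}^*)))\cdot$ (a product of $k-u$ outer factors on the right); by the entrywise domination above, each outer product has spectral norm bounded by $\||\tilde A|^{u-1}\|_2$ and $\||\tilde A|^{k-u}\|_2$, while the middle factor has $L_{2q}$-norm bounded by $\delta_{2q}(\Ad_G,j-u)$. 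Because the causality condition $\{\eps_s,s>t\}\perp\{Ad_s,s\le t\}$ still allows $Ad_{j-u}$ to depend on $\eps_{j-k}$, the factor $B_{j,k}$ need not be independent of $\eps_{j-k}$, so I apply Cauchy--Schwarz termwise, pairing the $L_{2q}$-bound on the matrix with the $L_{2q}$-moment $\max_s\|\eps_{0;s}\|_{E,2q}$. This is exactly where the $2q$-stability of $\Ad_G$ and the $L_{2q}$-moments on $\eps$ are consumed in order to yield only $q$-stability of $\Xb$. Summing geometrically in $k$ and $u$ against the absolutely summable sequence $\delta_{2q}(\Ad_G,\cdot)$ gives $\sum_{j\ge 0}\delta_q(\Xb,j)<\infty$, proving the first assertion.

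For the second claim, take $q\ge 4$; by the first part $\Xb$ is $4$-stable with finite $L^4$-moments. A short coupling based on $\|ZW-Z^*W^*\|_{E,2}\le\|Z-Z^*\|_{E,4}\|W\|_{E,4}+\|Z^*\|_{E,4}\|W-W^*\|_{E,4}$ transfers this stability to $\Yr_{t-1}$, $\Yr_{t-1}(\Yr_{t-1})^\top$, $X_{t;r}\Yr_{t-1}$, and $\eps_{t;r}\Yr_{t-1}$, all of which become $2$-stable with finite $L^2$-moments. The standard moment inequality $\|\sum_{t=1}^n(Z_t-EZ_1)\|_{E,2}=O(\sqrt n\,\Delta_2(Z))$ for $2$-stable processes (Theorem~1 of \citet{wu2005nonlinear}) then yields every $O_P(1/\sqrt n)$ rate in Assumption~\ref{ass.conv}; the finite fourth-moment claim (item~3) follows from the uniform bound $\|B_{t,j}\|_2=O(\rho^j)$ combined with $\max_s\|\eps_{0;s}\|_{E,2q}<\infty$. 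For Assumption~\ref{ass.norm}, the array $D_t:=(\eps_{t;r}-\mu_r)(\Yr_{t-1}-\mu_{\Yr})^\top$ is a martingale difference with respect to $\mathcal F_{t-1}:=\sigma(\eps_s,Ad_s:s\le t-1)$, because $\eps_t$ is independent of $\mathcal F_{t-1}$ by the causality condition. Since $E[\eps_{t;r}^2\mid\mathcal F_{t-1}]=\Sigma_{\eps;rr}$, the conditional variance of $n^{-1/2}\sum_{t=1}^n D_t$ equals $\Sigma_{\eps;rr}\cdot n^{-1}\sum_{t=1}^n(\Yr_{t-1}-\mu_{\Yr})(\Yr_{t-1}-\mu_{\Yr})^\top$, which converges in probability to $\Sigma_{\eps;rr}\Gamma_{\Yr}(0)$ by the $2$-stability shown above; together with the Lindeberg condition coming from the finite fourth moments, the martingale CLT (see e.g.\ \citet{wu2011asymptotic}) delivers Assumption~\ref{ass.norm}.

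The main obstacle is the bookkeeping in the second paragraph: the permitted dependence between $\eps_{j-k}$ and $B_{j,k}$ forces the Cauchy--Schwarz step that doubles the moment order, and one must verify that the resulting triple sum in $(j,k,u)$ factorizes, after the change of index $j\mapsto j-u$, into a finite geometric sum in $(k,u)$ against $\Delta_{2q}(\Ad_G)<\infty$. A routine secondary issue is showing that $\Yr_{t-1}$ and its quadratic functionals inherit enough stability and moments, which is handled by the product-coupling identity together with $E\|X_0\|_{E,4}<\infty$.
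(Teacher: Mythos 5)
Your argument for the first assertion ($q$-stability of $\Xb$) is essentially the paper's own proof: the same coupling of $\Xi_0$, the same telescoping decomposition of $B_{j,k}-B_{j,k}^*$, the same entrywise domination by $|\tilde A|^j$, and the same Cauchy--Schwarz step pairing the $L_{2q}$-norm of the matrix increment with the $L_{2q}$-moment of $\eps$ (your observation that $B_{j,k}$ and $\eps_{j-k}$ need not be independent, which forces the moment-doubling, is exactly the point). For the second assertion you diverge from the paper in the tools, not the substance: the paper derives all the $O_P(1/\sqrt n)$ rates and the CLT in Assumption~\ref{ass.norm} from Theorem~3 of \cite{wu2011asymptotic} applied to the $q$-stable processes $\{\Yr_t\}$ and $\{(\Yr_t-\mu_{\Yr})(\eps_{t+1;r}-\mu_r)\}$, using the Cram\'er--Wold device for the multivariate limit, whereas you use the $L^2$ moment inequality of \cite{wu2005nonlinear} for the rates and a martingale-difference CLT (with conditional variance converging via the same stability) for the normality. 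Both routes are valid; the martingale route buys a more self-contained CLT at the cost of having to check the Lindeberg condition and the measurability/independence structure of $\mathcal F_{t-1}$, while the paper's route reuses a single black-box theorem for everything.

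Two details that the paper's proof handles and yours omits. First, $\Yr_t$ is defined through the \emph{sample-dependent} index set $\mathcal{I}^{(r)}$; the paper shows that $2q$-stability and stationarity of $\Ad_G$ imply $\mathcal{I}^{(r)}=\mathcal{I}^{(r)}_E$ with probability tending to one, so that one may argue on the event where the selection matrix is the deterministic population one. Second, Assumption~\ref{ass.conv} requires $\|\Gamma_{\Yr}(0)^{-1}\|_2<\infty$; the paper verifies positive definiteness of $\Gamma_{\Yr}(0)$ from the positive definiteness of $\Sigma_\eps$ together with $E e_r^\top|G_s(Ad_1)|e_k>0$ for indices in $\mathcal{I}^{(r)}_E$. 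Neither point is deep, but both are genuinely part of verifying the assumptions and should be added to your write-up.
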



\subsection{Network autoregressive models for large dimension} \label{sec.LNAR}
The number of parameters in model \eqref{GNAR.eq} is of size $O(pd^2)$. If the underlying network is not very sparse, a reasonable estimate could be only obtained if $d\ll n$. Hence, in order to handle high-dimensional cases, meaning $d$ is of the same order as $n$ or even larger, we follow \citet{knight2019generalised} and simplify model \eqref{GNAR.eq}. For each component, the influence of the own lagged components is modeled separately, thus, we set  $e_s^\top G_j(\cdot) e_s=0$ for all $s=1,\dots,d,j=1,\dots,p$.  Then, the simplified model is given by
\begin{align} \label{eq.LNAR}
X_{t;r}=\sum_{j=1}^p \alpha_{j,r} X_{t-j;r}+ \beta_{j,r} e_r^\top G_j(Ad_{t-j}) X_{t-j} + \eps_{t;r}, r=1,\dots,p,
\end{align}
where $\alpha_{j,r},\beta_{j,r}\in \R, j=1,\dots,p,r=1,\dots,d$ and $E\eps_{t;r}=\mu_r$. 
Hence, this simplified model possesses in total only $d(2p+1)$ parameters or more precisely only $2p+1$ parameters for each component of the time series independently of the dimension. The parameter $\alpha$ quantifies the linear influence of the same component and $\beta$ the linear influence of the other components.  
Note that model \eqref{eq.LNAR} can be written as
\begin{align}
X_t=\sum_{j=1}^p
\begin{pmatrix}
\alpha_{j,1} & \beta_{j,1} & \dots & \beta_{j,1} & \beta_{j,1} \\
\beta_{j,2} & \alpha_{j,2} & \beta_{j,2} & \dots & \beta_{j,2} \\
\vdots  & \ddots & \ddots &  \ddots & \vdots \\
\beta_{j,d-1} & \dots & \beta_{j,d-1} & \alpha_{j,d-1} & \beta_{j,d-1} \\
\beta_{j,d} & \dots & \beta_{j,d} & \beta_{j,d} & \alpha_{j,d}
\end{pmatrix}\Mm(I_p + G_j(Ad_{t-j}))
X_{t-j} + \eps_t, \label{eq.LNAR.NAR}
\end{align}
and consequently fits into the framework \eqref{GNAR.eq}. We denote the process as Large Network AutoRegression (LNAR) and the coefficient matrices occurring in \eqref{eq.LNAR.NAR} by $A_{j,\alpha,\beta}, j=1,\dots,p$. Since a LNAR is an NAR process, a stationary solution is given by Lemma~\ref{lem.solution} if $\det(I-\sum_{j=1}^p |A_{j,\alpha,\beta}| z^j)\not = 0$ for all $|z|\leq 1$ or $\rho(\tilde A \Mm \tilde G(\cdot))<1$. If no restrictions on the underlying network are imposed, then the first condition implies that in order to obtain a stationary solution,  the parameter space depends on $d$.  This is not the case we would like to consider here, which is why conditions on the underlying network are imposed. We require that $\|G_j(\cdot)\|_\infty \leq 1, j=1,\dots,p$, which means that the sum of  weights of the edges going into a vertex  does not grow with the dimension $d$. To simplify notation, we bound the sum of weights by $1$.  \citet{knight2019generalised} require a similar condition in the case of a static network. Under this condition, we obtain a stationary solution if $\max_{r=1,\dots,p} \sum_{j=1}^p |\alpha_{j,r}|+|\beta_{j,r}|\leq C_\lambda<1$, see the following Lemma~\ref{lem.stat.LNAR}. Note that under the same conditions \citet{knight2019generalised} obtain a stationary solution in the case of a static network. 

\begin{lem}
\label{lem.stat.LNAR}
If $\|G_j(\cdot)\|_\infty \leq 1$ for $j=1,\dots,p,$ and $\max_{r=1,\dots,p} \sum_{j=1}^p |\alpha_{j,r}|+|\beta_{j,r}|\leq C_\lambda<1$, then \eqref{eq.LNAR} fulfills Assumption~\ref{ass.stat}b) and 
possesses a stationary solution.  The solution takes the form
\begin{align}\label{eq.LNAR.MAinfty}
X_t=\sum_{j=0}^\infty ( e_1 \otimes I_d)^\top \prod_{s=1}^j (\tilde A_{\alpha,\beta} \Mm G(\widetilde{Ad}_{t-s})) ( e_1 \otimes I_d) \eps_{t-j}=:\sum_{j=0}^\infty B_{t,j} \eps_{t-j},  
\end{align}
where 
\begin{align*}
    &\widetilde{G(Ad)}_{t-1})=\\
&\begin{pmatrix}
I_d+G_1(Ad_{t-1}) & I_d+ G_2(Ad_{t-2}) & \dots &  I_d+ G_{p-1}(Ad_{t-p+1}) &I_d+ G_{p}(Ad_{t-p}) \\
I_d & 0 & \dots & 0 & 0 \\
0 & I_d & & 0 & 0 \\
\vdots & & \ddots & \vdots & \vdots \\
0 & 0& \dots & I_d & 0
\end{pmatrix}.
\end{align*}

Furthermore, $\rho(|\tilde A_{\alpha,\beta} \Mm \tilde G(\cdot)|)\leq C_\lambda^{1/p}.$
\end{lem}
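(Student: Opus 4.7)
The plan is to verify Assumption~\ref{ass.stat}b) by establishing the final spectral-radius bound $\rho(|\tilde A_{\alpha,\beta} \Mm \tilde G(\cdot)|)\leq C_\lambda^{1/p}$ stated in the lemma; once that is in hand the stationary $MA(\infty)$ representation \eqref{eq.LNAR.MAinfty} follows directly from Lemma~\ref{lem.solution}, since LNAR is by construction a special instance of the GNAR model~\eqref{GNAR.eq} with coefficient matrices $A_{j,\alpha,\beta}$ and filter $I_d+G_j$. So the real work lies in the spectral bound.

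First I would compute the row sums of $|\tilde A_{\alpha,\beta}\Mm\tilde G(\cdot)|$. The sub-diagonal blocks are identities and contribute row sums equal to $1$. For the top block, use that $e_s^\top G_j(\cdot)e_s=0$ for all $s$ and $\|G_j(\cdot)\|_\infty\leq 1$: then row $r$ of the $j$-th top block has entry $|\alpha_{j,r}|$ at column $r$ and $|\beta_{j,r}|\cdot|G_j(Ad_{t-j})_{r,i}|$ at the remaining columns, so the total row sum is bounded by $\sum_{j=1}^p\bigl(|\alpha_{j,r}|+|\beta_{j,r}|\sum_{i\neq r}|G_j(Ad_{t-j})_{r,i}|\bigr)\leq\sum_{j=1}^p(|\alpha_{j,r}|+|\beta_{j,r}|)\leq C_\lambda$. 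Note however that, because of the identity sub-diagonal, $\|\tilde A_{\alpha,\beta}\Mm\tilde G(\cdot)\|_\infty=1$, which is not sharp enough and the block-companion structure must be exploited.

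Second I would exploit this companion structure. Set $M_j:=|A_{j,\alpha,\beta}\Mm(I_d+G_j(Ad_{t-j}))|$ and $M:=|\tilde A_{\alpha,\beta}\Mm\tilde G(\cdot)|$. If $\lambda\neq 0$ is an eigenvalue of $M$ with eigenvector $v=(v_1^\top,\ldots,v_p^\top)^\top$, the sub-diagonal blocks force $v_{k+1}=v_k/\lambda$, hence $v_k=v_1/\lambda^{k-1}$, and the top block yields $v_1=\bigl(\sum_{j=1}^p M_j/\lambda^j\bigr)v_1$. So $1$ is an eigenvalue of $\sum_j M_j/\lambda^j$. Using the standard inequality $\rho(A)\leq\rho(|A|)$ together with $|M_j/\lambda^j|=M_j/|\lambda|^j$ (since the $M_j$ are nonnegative), one obtains
\[
1\leq\rho\Bigl(\sum_{j=1}^p M_j/\lambda^j\Bigr)\leq\rho\Bigl(\sum_{j=1}^p M_j/|\lambda|^j\Bigr)\leq\Bigl\|\sum_{j=1}^p M_j/|\lambda|^j\Bigr\|_\infty\leq\bigl(\max_{1\leq j\leq p}|\lambda|^{-j}\bigr)\cdot C_\lambda,
\]
where the last inequality combines $\|B+C\|_\infty\leq\|B\|_\infty+\|C\|_\infty$ with the row-sum bound from step one. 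A case split finishes the argument: if $|\lambda|\geq 1$ the maximum equals $|\lambda|^{-1}$ and one gets $|\lambda|\leq C_\lambda<1$, a contradiction; if $|\lambda|<1$ the maximum equals $|\lambda|^{-p}$ and one gets $|\lambda|^p\leq C_\lambda$. In either case $|\lambda|\leq C_\lambda^{1/p}<1$, proving both the spectral-radius bound and Assumption~\ref{ass.stat}b).

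The only genuinely nontrivial step is the last one: the naive infinity-norm bound on the stacked matrix only gives $\|M\|_\infty\leq 1$, so one cannot conclude stationarity directly from a matrix norm. The extra factor of $1/p$ in the exponent has to be squeezed out of the block-companion structure, and the cleanest route is the eigenvector reduction to the polynomial pencil $I-\sum_j M_j/\lambda^j$ combined with Perron-Frobenius (in the form $\rho(A)\leq\rho(|A|)$) and the two-case argument on $|\lambda|$.
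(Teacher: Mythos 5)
Your argument is correct and is essentially the paper's own proof: both reduce the eigenvalue equation of the nonnegative companion matrix through the identity sub-diagonal blocks to the matrix polynomial $\sum_{j=1}^p M_j\lambda^{-j}$ and then extract $|\lambda|^p\le C_\lambda$ from the row-sum hypothesis (the paper contracts to $v_p$ and first establishes $\rho\le 1$ via $\|\cdot\|_\infty$ where you contract to $v_1$ and run a case split on $|\lambda|$ --- immaterial differences). One small repair: the inequality $\bigl\|\sum_{j=1}^p M_j/|\lambda|^j\bigr\|_\infty\le\bigl(\max_{1\le j\le p}|\lambda|^{-j}\bigr)C_\lambda$ does not follow from subadditivity of $\|\cdot\|_\infty$ alone, since $\sum_j\max_r(|\alpha_{j,r}|+|\beta_{j,r}|)$ may strictly exceed $\max_r\sum_j(|\alpha_{j,r}|+|\beta_{j,r}|)\le C_\lambda$; instead use nonnegativity of the $M_j$ to get $\sum_j M_j/|\lambda|^j\le\bigl(\max_j|\lambda|^{-j}\bigr)\sum_j M_j$ entrywise and then apply your step-one row-sum bound to the single matrix $\sum_j M_j$.
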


For component $r=1,\dots,d$, let $w_r=(\alpha_{1,r},\beta_{1,r},\dots,\alpha_{p,r},\beta_{p,r})^\top \in \R^{2p}$ and $Y_{t-1}^{(r)}=(X_{t-1;r},e_r^\top G_1(Ad_{t-1})X_{t-1},\dots,X_{t-p;r},e_r^\top G_p(Ad_{t-p})X_{t-p})^\top$. Then, \eqref{eq.LNAR} can be written as $X_{t;r}= w_t^\top Y_{t-1}^{(r)}+\eps_{t;r}$. This is the same framework as in Section~\ref{sec.NAR.stat}, and the linear system \eqref{eq.sys.est} gives a least square estimate.
To cover a high-dimensional setting, we study the theoretical properties of this estimator in a nonasymptotic framework as it is done in the high-dimensional vector autoregressive case, see among others \cite{basu2015}. We make use of the Nagaev inequality for dependent variables, see Theorem 2 in \cite{liu2013probability}, to formulate nonasymptotic error bounds. Again, the physical dependency concept is used to quantify the dependency structure of $\Ad$, see the following Assumption~\ref{ass.LNAR.qstable}. 

\begin{asp} \label{ass.LNAR.qstable}
For $j=1,\dots,p$ let $\|G_j(\cdot)\|_\infty\leq 1$  and let $\Ad_{G\ind}=\{\max_{j=1,\dots,p} \linebreak \max_r e_r^\top|G_j(Ad_t)|\ind, t \in \Z\}$ be $2q$-stable with $\sum_{k=1}^\infty k \delta_{2q}(\Ad_{G\ind},k)\leq C_G$, where $C_G  <\infty$ is some constant. 
Furthermore, let $\max_{r=1,\dots,d} \sum_{j=1}^p |\alpha_{j,r}|+|\beta_{j,r}|\leq C_\lambda^p<1$ and 
$\||\tilde A_{\alpha,\beta} \Mm \tilde G(\cdot)|^j\|_\infty \leq C_A C_\lambda^j$. The constants appearing here do not depend on the dimension $d$. 
\end{asp}
Note that $\| G(\cdot) \|_\infty\leq 1$ and $\max_{r=1,\dots,d} \sum_{j=1}^p |\alpha_{j,r}|+|\beta_{j,r}|\leq C_\lambda^p<1$ implies $\||\tilde A_{\alpha,\beta} \Mm \tilde G(\cdot)|^j\|_\infty \leq \||\tilde A_{\alpha,\beta} \Mm \tilde W|\|_\infty^j\leq 1$. Furthermore, we have a bound for the largest eigenvalue $\rho(|\tilde A_{\alpha,\beta} \Mm \tilde W|)\leq C_\lambda<1$, see the proof of Lemma~\ref{lem.stat.LNAR}. If 
$G_j(\cdot)=W_j\Mm \cdot,$ where $W_j \in [-1,1]^{d \times d}$ and $\|W_j\|_\infty\leq 1$ for all $j$, then $\delta_{2q}(\Ad_{G\ind},k)=\|\max_r\max_j e_r^\top (G_j(Ad_k)-G_j(Ad_k^*))\ind\|_{E,2q}= \| \max_r \max_j e_r^\top(W_j \Mm (Ad_k-Ad_k^*))\ind\|_{E,2q}\leq \max_j \|W_j\|_\infty \delta(\Ad,k)\leq \delta(\Ad,k)$.
If $\Ad_{G\ind}$ possesses geometrically decaying physical dependence coefficients, then $\sum_{k=1}^\infty k \delta_{2q}(\Ad_{G\ind},k)\leq C_G<\infty$.

This Assumption implies that $\Xb$ as well as $\{Y_t^{(r)}, t \in \Z\}$ are $q$-stable and their physical dependency quantity $\sum_{j=0}^\infty \delta_q(\cdot,j)$ can be bounded independently from the dimension $d$, see Lemma~\ref{lem.physical.LNAR} for details.

\begin{lem}\label{lem.physical.LNAR}
If Assumption~\ref{ass.LNAR.qstable} holds, and $\max_{i} \| \eps_{0,i}\|_{E,2q}<\infty$, then $\Xb$ generated by model \eqref{eq.LNAR} is $q$-stable and 
$\sum_{j=0}^\infty \delta_q(\{\max_{r=1,\dots,d} X_{t;r}, t \in \Z\},j)\leq \max_{i} \| \eps_{0,i}\|_{E,2q} C_A/(1-C_\lambda)(C_A/C_\lambda/(1-C_\lambda) \sum_{j=0}^\infty \delta_{2q}(\Ad_{G\ind},j)+1)$. Furthermore, we have for $k=1,\dots,2p$,
$
\sum_{j=0}^\infty j \delta_q(\{\max_r  e_k^\top Y_{t}^{(r)},t \in Z\},j)\leq 
C_{\delta_Y},
$
where 
$$
C_{\delta Y}= \max_i \| \eps_{0,i}\|_{E,2q} \frac{C_A}{1-C_\lambda}\Big[\frac{C_\lambda}{1-C_\lambda}+C_G(1+\frac{C_A(2-C_\lambda)}{(1-C_\lambda)^2})\Big].
$$
\end{lem}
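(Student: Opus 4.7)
I will work directly with the MA$\infty$ representation from Lemma~\ref{lem.stat.LNAR}, $X_t=\sum_{k=0}^\infty B_{t,k}\eps_{t-k}$, where $B_{t,k}=(e_1\otimes I_d)^\top\prod_{s=1}^k M_{t,s}(e_1\otimes I_d)$ and $M_{t,s}:=\tilde A_{\alpha,\beta}\Mm\widetilde{G(Ad)}_{t-s}$. Coupling $\Xi_0\to\Xi_0'$ (which replaces both $\eps_0$ and the network innovation $\xi_0$) gives the decomposition
\[
X_j-X_j^*=\sum_{k=1}^\infty(B_{j,k}-B_{j,k}^*)\eps_{j-k}+B_{j,j}^*(\eps_0-\eps_0'),
\]
since $\eps_{j-k}^*=\eps_{j-k}$ for $k\ne j$ and $Ad_t^*=Ad_t$ for $t<0$ by causality. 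The first sum captures the propagation of the network shock through the adjacency product and is the nontrivial piece.

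Expanding $B_{j,k}-B_{j,k}^*$ by the standard product telescoping $\prod_s M_s-\prod_s M_s^*=\sum_{u=1}^k\bigl(\prod_{s<u}M_s\bigr)(M_u-M_u^*)\bigl(\prod_{s>u}M_s^*\bigr)$ and working throughout with the max row-sum norm $\|\cdot\|_\infty$, Assumption~\ref{ass.LNAR.qstable} gives $\|\prod_s M_{j,s}\|_\infty\le C_A C_\lambda^k$ (also for the starred version). The difference $M_{j,u}-M_{j,u}^*$ is supported in the top block row; writing the Hadamard product row by row and using $\max_c|(A_{i,\alpha,\beta})_{r,c}|\le|\alpha_{i,r}|+|\beta_{i,r}|$ together with $\max_r\sum_{i=1}^p(|\alpha_{i,r}|+|\beta_{i,r}|)\le C_\lambda^p$, one obtains $\|M_{j,u}-M_{j,u}^*\|_\infty\le C_\lambda^p\max_{i=1,\ldots,p}D_{j-u-i+1}$ with $D_m:=\max_{r,\ell} e_r^\top|G_\ell(Ad_m)-G_\ell(Ad_m^*)|\ind$. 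By the definition of $\Ad_{G\ind}$ and the entrywise triangle inequality, $\|D_m\|_{E,2q}\le\delta_{2q}(\Ad_{G\ind},m)$.

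To finish the first claim I use $\max_r|e_r^\top A v|\le\|A\|_\infty\|v\|_\infty$ and Cauchy--Schwarz to split the matrix-norm factor from the $\eps$-factor at the $L_{2q}$-level. Summing over $j\ge 0$ and swapping the summation orders, the triple sum involving $C_\lambda^k$ and $\delta_{2q}$ factorises as $\bigl(\sum_k k C_\lambda^k\bigr)\bigl(\sum_m\delta_{2q}(\Ad_{G\ind},m)\bigr)$, producing the $\sum_j\delta_{2q}(\Ad_{G\ind},j)$ term in the stated bound; the direct innovation term contributes the additive $C_A/(1-C_\lambda)$ part. For the second claim, each coordinate of $Y_t^{(r)}$ is either $X_{t-\ell;r}$ (handled by the first claim and the shift-invariance of $\delta_q$) or $e_r^\top G_\ell(Ad_{t-\ell})X_{t-\ell}$; for the latter I decompose the coupled difference as $e_r^\top G_\ell(Ad_{t-\ell})(X_{t-\ell}-X_{t-\ell}^*)+e_r^\top(G_\ell(Ad_{t-\ell})-G_\ell(Ad_{t-\ell}^*))X_{t-\ell}^*$ and reapply the previous estimates together with $\|G_\ell\|_\infty\le 1$; the extra weight $j$ in $\sum_j j\delta_q(\cdot,j)$ is precisely absorbed by the hypothesis $\sum_k k\delta_{2q}(\Ad_{G\ind},k)\le C_G$.

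The hard part will be the bookkeeping of the four-fold sum over $j,k,u,i$ and aligning its geometric-series constants with the formula in the statement, plus verifying that the max-inside-the-norm quantity $\|D_m\|_{E,2q}$ is correctly dominated by the process-level quantity $\delta_{2q}(\Ad_{G\ind},m)$ (this is the nontrivial direction of the comparison, and is what forces the definition of $\Ad_{G\ind}$ to involve the outer maximum over $r,\ell$); once these ingredients are in place the weight-$j$ refinement for the $Y^{(r)}$ bound follows by the same interchange-of-summation argument used for $X$, now using the $\sum_k k\delta_{2q}$ hypothesis instead of $\sum_k \delta_{2q}$.
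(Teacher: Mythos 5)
Your proposal follows essentially the same route as the paper's proof: the MA$(\infty)$ coupling decomposition $X_j-X_j^*=\sum_k(B_{j,k}-B_{j,k}^*)\eps_{j-k}+B_{j,j}^*(\eps_0-\eps_0')$, the product telescoping of $B_{j,s}-B_{j,s}^*$, the $\|\cdot\|_\infty$ bounds $\||\tilde A_{\alpha,\beta}\Mm\tilde G(\cdot)|^k\|_\infty\le C_AC_\lambda^k$ with the Cauchy--Schwarz split at the $L_{2q}$ level, the interchange of summations to factor out $\sum_j\delta_{2q}(\Ad_{G\ind},j)$, and the two-term decomposition $e_r^\top G_\ell(Ad)(X-X^*)+e_r^\top(G_\ell(Ad)-G_\ell(Ad^*))X^*$ for the $Y^{(r)}$ coordinates with the weighted hypothesis $\sum_k k\,\delta_{2q}(\Ad_{G\ind},k)\le C_G$ absorbing the extra factor $j$. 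The one point you flag for verification --- that $\delta_{2q}(\Ad_{G\ind},m)$ must be read as the $L_{2q}$ norm of the maximum of the coupled differences rather than the difference of the maxima --- is exactly the reading the paper itself adopts in the remark following Assumption~\ref{ass.LNAR.qstable}, so your plan matches the paper's argument in all essentials.
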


With this results, we can formulate the nonasymptotic error bounds. In order to handle a high-dimensional setting an import result is to obtain an error bound which grows only moderately with $d$, the dimension of the process. Note that in contrast to the estimation of a high-dimensional VAR system, e.g. \cite{kock2015oracle}, the dimension of the parameter vector does not depend on $d$. This enables us to obtain an error bound which does not depend on $d$ at all, see the following Theorem~\ref{thm.LNAR} for details.  

\begin{thm}\label{thm.LNAR}
Under Assumption~\ref{ass.LNAR.qstable} we have for component $r=1,\dots,d,$ of model \eqref{eq.LNAR}, and the estimators given by the linear system \eqref{eq.sys.est} for some $y \in \R$ with probability of at least $(1-c_q(n-p)^{1-q}y^{-q}-(c_q^\prime+2) \exp(-c_q (n-p)y^2))^4=:C_q(n,y)^4$, where $c_q,c_q^\prime$ are constants depending on $q$ only, the following error bounds
\begin{align}
\|\hat w_r-w_r\|_1\leq& y\frac{\sqrt{2p} C_{\delta Y}(\|\eps_{0;r}\|_{E,q}+C_{\delta Y}y+\mu_r +\|\mu_{\Yr}\|_1)}{\rho(\Gamma^{-1})-y 2p C_{\delta Y}(2C_A/(1-C_\lambda)+2\|\mu_{\Yr}\|_1+yC_{\delta Y}) }, \\
|\hat \mu_r - \mu_r| \leq& (\|\mu_{\Yr}\|_1+y C_{\delta Y})\|\hat w_r-w_r\|_1+y C_{\delta Y}.
\end{align}
\end{thm}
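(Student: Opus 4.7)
The plan follows the standard recipe for M-estimation under dependence: reduce the estimator to closed form, apply the Nagaev inequality of \cite{liu2013probability} on a good event, and control the empirical Gram matrix by a Neumann-series argument.

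First I would solve the linear system~\eqref{eq.sys.est} explicitly. Writing $\bar X^r=(n-p)^{-1}\sum_{t=p+1}^n X_{t;r}$, $\bar Y^{(r)}=(n-p)^{-1}\sum_{t=p}^{n-1}\Yr_t$, $\bar\eps_r=(n-p)^{-1}\sum_{t=p+1}^n \eps_{t;r}$ and $\hat\Gamma=(n-p)^{-1}\sum_{t=p}^{n-1}(\Yr_t-\bar Y^{(r)})(\Yr_t-\bar Y^{(r)})^\top$, the system reduces to $\hat\mu_r=\bar X^r-\hat w_r^\top \bar Y^{(r)}$ and, after substituting $X_{t+1;r}=w_r^\top\Yr_t+\eps_{t+1;r}$, to
\begin{equation*}
\hat w_r-w_r=\hat\Gamma^{-1}V_r,\qquad V_r=\frac{1}{n-p}\sum_{t=p}^{n-1}(\Yr_t-\bar Y^{(r)})(\eps_{t+1;r}-\bar\eps_r).
\end{equation*}
The second bound then follows from $\hat\mu_r-\mu_r=(\bar\eps_r-\mu_r)-(\hat w_r-w_r)^\top\bar Y^{(r)}$ combined with $\|\bar Y^{(r)}\|_1\leq\|\mu_{\Yr}\|_1+\|\bar Y^{(r)}-\mu_{\Yr}\|_1$.

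The probabilistic content comes from four componentwise applications of the Nagaev inequality. By Lemma~\ref{lem.physical.LNAR} every coordinate sequence $\{e_k^\top\Yr_t\}_t$ with $k\leq 2p$, together with the products $\{e_k^\top\Yr_t\cdot e_\ell^\top\Yr_t\}_t$ and $\{e_k^\top\Yr_t\cdot\eps_{t+1;r}\}_t$ obtained by the product rule $\delta_q(ZW,j)\leq\|Z_0\|_{E,2q}\delta_{2q}(W,j)+\|W_0\|_{E,2q}\delta_{2q}(Z,j)$, is $q$-stable with physical-dependence sum bounded by $C_{\delta Y}$ (combined with $\|\eps_{0;r}\|_{E,q}$), uniformly in $r$. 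Hence each of $\|\bar Y^{(r)}-\mu_{\Yr}\|_\infty$, $|\bar\eps_r-\mu_r|$, $\|\hat\Gamma-\Gamma_{\Yr}(0)\|_{\max}$ and $\|V_r\|_\infty$ is of order $yC_{\delta Y}$ with probability at least $C_q(n,y)$; intersecting conservatively gives probability at least $C_q(n,y)^4$. Converting $\ell_\infty$ deviations on $2p$-vectors into $\ell_1/\ell_2$ deviations yields the factors $\sqrt{2p}$ and $2p$.

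On this event $\hat\Gamma$ is invertible by a Neumann-series argument, giving $\|\hat\Gamma^{-1}\|_2\leq 1/(\rho(\Gamma^{-1})-\|\hat\Gamma-\Gamma_{\Yr}(0)\|_2)$; expanding $\hat\Gamma-\Gamma_{\Yr}(0)$ as the difference $(n-p)^{-1}\sum\Yr_t(\Yr_t)^\top-E\Yr_0(\Yr_0)^\top$ minus $\bar Y^{(r)}(\bar Y^{(r)})^\top-\mu_{\Yr}\mu_{\Yr}^\top$, together with the coordinate bound $\|\Yr_t\|_\infty\leq(C_A/(1-C_\lambda))\max_i\|\eps_{0;i}\|_{E,2q}$ inherited from the MA$(\infty)$ representation~\eqref{eq.LNAR.MAinfty}, reproduces exactly the denominator $\rho(\Gamma^{-1})-y\cdot 2pC_{\delta Y}(2C_A/(1-C_\lambda)+2\|\mu_{\Yr}\|_1+yC_{\delta Y})$. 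Combining with $\|\hat w_r-w_r\|_1\leq\sqrt{2p}\,\|\hat\Gamma^{-1}\|_2\sqrt{2p}\|V_r\|_\infty$ produces the numerator, where the factor $(\|\eps_{0;r}\|_{E,q}+C_{\delta Y}y+\mu_r+\|\mu_{\Yr}\|_1)$ separates the Nagaev-controlled fluctuation of $(n-p)^{-1}\sum\Yr_t\eps_{t+1;r}$ from the deterministic contribution of the cross-term $\bar Y^{(r)}\bar\eps_r$.

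The main obstacle is bookkeeping: every deviation has to be re-expressed so that the two universal constants $c_q,c_q'$ of Nagaev absorb all the problem-specific quantities $C_{\delta Y},C_A,C_\lambda,\|\mu_{\Yr}\|_1$, and the union of four separate tail events must be matched to the single expression $(c_q^\prime+2)\exp(-c_q(n-p)y^2)+c_q(n-p)^{1-q}y^{-q}$. In particular, care is needed so that the product-rule estimates for the physical dependence of $\Yr_{t;k}\Yr_{t;\ell}$ and $\Yr_{t;k}\eps_{t+1;r}$ do not inflate the dependence sum beyond $C_{\delta Y}$-multiples --- this is exactly what Lemma~\ref{lem.physical.LNAR} was tailored to guarantee.
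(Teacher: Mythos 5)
Your proposal is correct in substance and uses the same essential ingredients as the paper: the Nagaev inequality of Theorem~2 in \cite{liu2013probability}, the physical-dependence bounds of Lemma~\ref{lem.physical.LNAR} (including the product rule for $\delta_q$ of $Y^{(r)}_{t;k}Y^{(r)}_{t;\ell}$ and $Y^{(r)}_{t;k}\eps_{t+1;r}$), four deviation events intersected to give $C_q(n,y)^4$, and the decomposition of $\hat\Gamma-\Gamma$ into a second-moment term and a mean-product term. The route differs in two places. First, you invert $\hat\Gamma$ explicitly and control $\|\hat\Gamma^{-1}\|_2$ by a Neumann series, whereas the paper never inverts: it uses the normal equations to write $v^\top\hat\gamma=v^\top\Gamma v+v^\top(\hat\Gamma-\Gamma)v$ for $v=\hat w_r-w_r$ and lower-bounds the quadratic form, which yields the same denominator $\rho(\Gamma^{-1})-2pC_{\delta Y}y(\cdots)$ without needing invertibility as a separate step. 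Second, you apply Nagaev coordinatewise ($e_k^\top Y^{(r)}_t$ and coordinate products) and then convert $\ell_\infty$ to $\ell_1/\ell_2$; the paper instead applies Nagaev to the scalar sequences $v^\top(Y^{(r)}_{t}-\mu_{\Yr})$, $\eps_{t;r}v^\top(Y^{(r)}_{t-1}-\mu_{\Yr})$ and $(v^\top(Y^{(r)}_{t-1}-\mu_{\Yr}))^2$ for unit-$\ell_1$ directions $v$, exploiting $\delta_q(\{v^\top Y^{(r)}_\cdot\},j)\le\max_s\delta_q(\{Y^{(r)}_{\cdot;s}\},j)$. This second difference matters for your probability bookkeeping: with a genuinely coordinatewise scheme the Gram-matrix control involves $O(p^2)$ coordinate pairs, so arriving at exactly four Nagaev applications (and hence the exponent $4$ in $C_q(n,y)^4$) requires either the paper's directional formulation or an additional union bound over coordinates that would introduce factors of $p$ into the probability. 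Apart from this, the two arguments are interchangeable and produce identical constants.
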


For $y=o(1/\sqrt n)$,  the error tends to zero but the probability still faces $1$ with increasing $n$. This rate is independent of the dimension $d$. This enables us to use LNAR for forecasting in a high-dimensional framework. The forecasting procedure is analogue to the one for the NAR approach, see the end of the previous subsection.

\section{Numerical Examples} \label{Simulation}
In this section, the forecasting performance of the models presented in Section~\ref{sec.3.2} is investigated in finite samples. For a low-dimensional example and a high-dimensional example, we forecast $X_{n+1},\dots,X_{n+h}$ based on observations $X_1,\dots,X_n$ and $Ad_1,\dots,Ad_{n-1}$, where $h=4$. The performance is measured by computing the mean squared error (MSE) averaged over all components via a Monte Carlo simulation using $B=1000$ repetitions, meaning $$MSE(\hat X_{n+h}^{(h)})\approx 1/d \sum_{j=1}^d 1/B \sum_{i=1}^B (X_{n+h,i;j}-\hat X_{n+h,i;j}^{(h)})^2,$$ where $X_{n,i;j}$ denotes the $j$th component of the $n$th observation of the $i$th Monte Carlo sample. In the following, we denote the approach using model \eqref{GNAR.eq} as NAR and the approach using model \eqref{eq.LNAR} as LNAR. As a benchmark, we use a vector autoregressive model given by $X_t=\sum_{j=1}^p A_j X_{t-j} +\eps_t$, where $A_1,\dots,A_p \in \R^{d\times d}$. This approach is denoted by VAR. The three models considered have a tuning parameter $p$ which specifies the lag order. For all three models,  the Bayesian Information Criterion (BIC) is used to automatically choose  the lag-order $p$, see among others Section 5.5 in \cite{BrockwellDavis1991}. 

The approaches NAR and LNAR make use of the underlying network structure. That means in order to compute $X_{n+h}^{h}$, the approaches NAR and LNAR require an observation or at least an estimate of the underlying network structure. Both cases are considered here, in the first case we forecast $Ad_{n},\dots,Ad_{n+h-1}$ based on the observations $Ad_1,\dots,Ad_{n-1}$, and in the second case we assume that $Ad_n,\dots,Ad_{n+h-1}$ is observed. In order to distinguish between these two cases, we denote the forecast of approach NAR based on an estimated network by NAR$(\widehat \Ad)$ and the forecast based on a known network structure by NAR$(\Ad)$. An analogue notation is used for LNAR.

All computations are done in \emph{R} \cite{R} using the additional packages \emph{tergm, BigVAR, markovchain} \cite{tergm_R,BigVAR,markovchain}.

In the first example, a network with $4$ vertices is considered. The adjacency matrix process $\Ad$ is a Markovian process and the edges are independent from each other. The process $\Ad$ is given by
\begin{align}
\left(P(Ad_{t;ij}=1| Ad_{t-1;ij}=1)\right)_{i,j=1,\dots,d}&=
 \begin{pmatrix}
      0.95 & 0.70 & 0.99 & 0 \\ 
 0 & 0.95 & 0.70 & 0 \\ 
 0.99 & 0.50 & 0.95 & 0.95 \\ 
 0.30 & 0 & 0 & 0.95
 \end{pmatrix}, \nonumber \\ 
 \left(
 P(Ad_{t;ij}=1| Ad_{t-1;ij}=0)\right)_{i,j=1,\dots,d}&=
 \begin{pmatrix}
 0.05 & 0.10 & 0.01 & 0 \\ 
 0 & 0.05 & 0.30 & 0 \\ 
 0.01 & 0.50 & 0.05 & 0.05 \\ 
 0.30 & 0 & 0 & 0.05 
 \end{pmatrix}. \label{example.1.network}
 \end{align}
 The process $\Xb$ is an NAR$(1)$ process and is given by 
 \begin{align}
 X_{t}= \left( \alpha
 \Mm Ad_{t-1}\right) X_{t-1} + \eps_{t}, t \in \Z,  \eps_{1} \sim \mathcal{N}\left( (-1 , 4,-9 ,16)^\top , I_4\right), \label{example.1.process}
 \end{align}
where $\alpha= \begin{pmatrix}
 0.25 & 0.7 & 0 & 0 \\ 
 0 & 0.25 & 0.7 & 0 \\ 
 0 & 0 & 0.25 & 0.7 \\ 
 0.7 & 0 & 0 & 0.25
 \end{pmatrix}$.
 
\begin{figure}[ht]
\begin{center}
\animategraphics[width=\textwidth,loop=TRUE,autoplay=TRUE]{1}{MarkovNetwork_d_4}{}{}
\includegraphics[width=\textwidth]{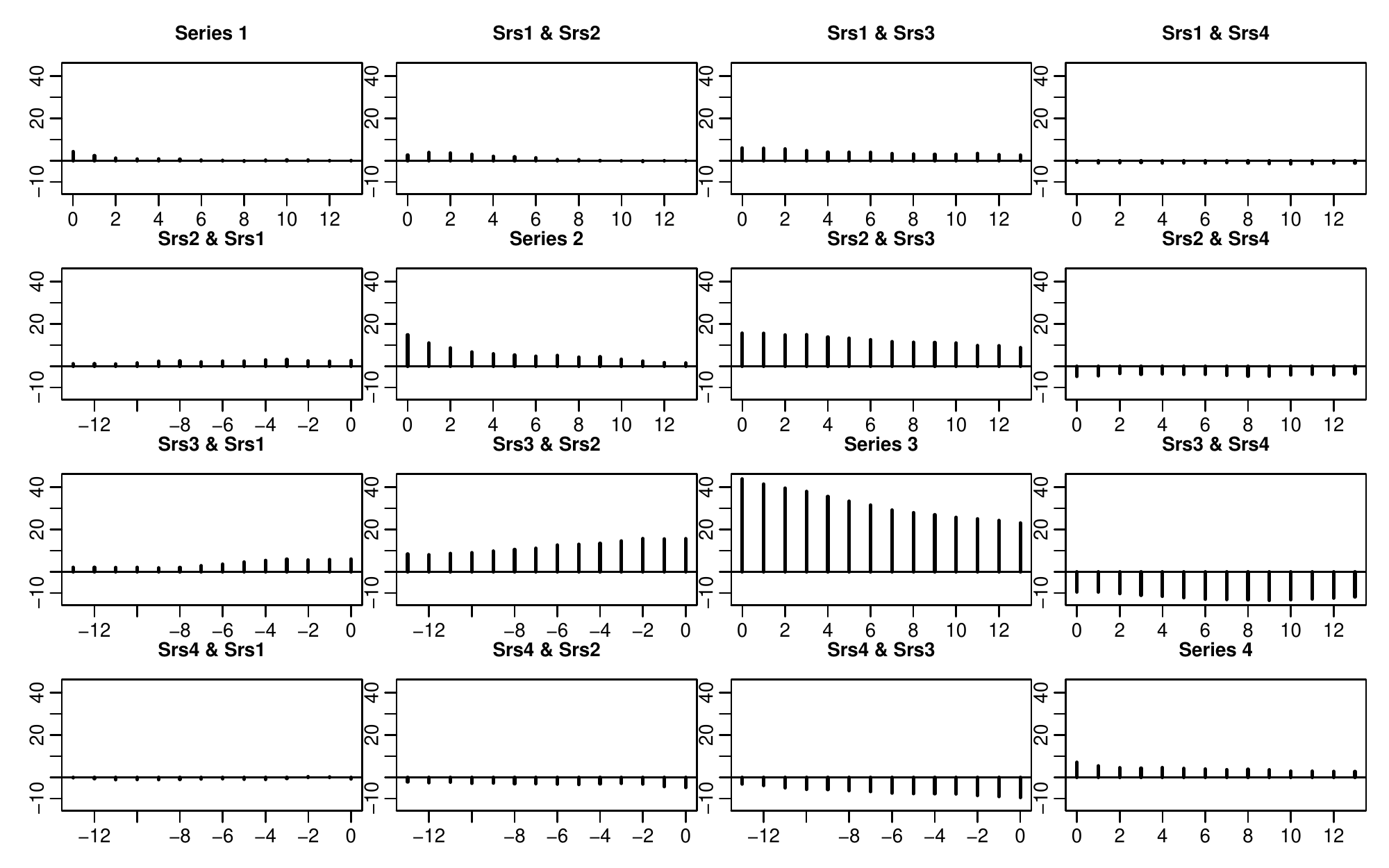} 
\end{center}
\caption[Realization of process $(\ref{example.1.process})$]{The upper figure presents a realization of the network of the example given by $(\ref{example.1.network})$ and a realization of the time series $X$ given by $(\ref{example.1.process})$. Red dots indicate the current time point. \textit{This figure contains animation only visible on screen.} The lower graphic presents the sample autocovariance function of $X$.} \label{example.1.fig.real}
\end{figure} 
 
A realization of the network, the time series, and the sample autocovariance function are displayed in Figure \ref{example.1.fig.real}. The edges $(3,1)$ and $(1,3)$ have a coefficient of $0$, hence, whether they are present or not, they do not influence the time series $\Xb$. 

The model structure of $\Ad$ is used to compute a forecast of $\Ad$. Thus, for each component of $\Ad$, a discrete Markov chain is fitted to $Ad_1,\dots,Ad_{n-1}$, and this Markov chain is then used to obtain a forecast  for $Ad_{n},\dots,Ad_{n+h-1}$. For this the \emph{R}-package \emph{markovchain} was used.

The mean squared errors for the forecast horizons $h=1,\dots,4$ are displayed in Table~\ref{table.example1}. Note that an optimal one-step ahead forecast for this process would possess a forecast error of $1$. This can be nearly achieved by NAR with a known network structure and moderate sample size. If the underlying network structure is unknown, NAR outperforms the other approaches for the forecast horizons up to $h=3$. For forecast horizons further ahead, VAR performs slightly better. This drop in performance for horizons further ahead is mainly caused by the estimate of the underlying network structure. The approach used here causes that $\hat Ad_{n+h}^{(h)}$ is identical for all horizons $h=1,\dots,4$. This estimate gets poorer, the larger $h$ is.

\begin{table}[t]
\centering
\begin{tabular}{l|l|cccc}
\hline
n   &                      & $h=1$ & $h=2$ & $h=3$ & $h=4$ \\ \hline
    & NAR$(\Ad)$           & 1.24  & 1.42  & 1.53  & 1.61  \\
    & NAR$(\widehat \Ad)$  & 1.47  & 2.80  & 3.67  & 4.14  \\
50  & LNAR$(\Ad)$          & 1.89  & 2.32  & 2.62  & 2.91  \\
    & LNAR$(\widehat \Ad)$ & 2.00  & 3.18  & 4.04  & 4.63  \\
    & VAR                  & 2.83  & 3.56  & 4.08  & 4.35  \\ \hline
    & NAR$(\Ad)$           & 1.07  & 1.19  & 1.23  & 1.25  \\
    & NAR$(\widehat \Ad)$  & 1.20  & 2.76  & 3.56  & 4.13  \\
100 & LNAR$(\Ad)$          & 1.78  & 2.18  & 2.35  & 2.47  \\
    & LNAR$(\widehat \Ad)$ & 1.83  & 3.08  & 3.86  & 4.49  \\
    & VAR                  & 2.65  & 3.40  & 3.80  & 4.08  \\ \hline
    & NAR$(\Ad)$           & 1.01  & 1.12  & 1.14  & 1.16  \\
    & NAR$(\widehat \Ad)$  & 1.11  & 2.63  & 3.55  & 4.09  \\
200 & LNAR$(\Ad)$          & 1.71  & 2.04  & 2.20  & 2.30  \\
    & LNAR$(\widehat \Ad)$ & 1.74  & 2.88  & 3.71  & 4.30  \\
    & VAR                  & 2.63  & 3.30  & 3.69  & 4.01  \\ \hline
    & NAR$(\Ad)$           & 1.01  & 1.11  & 1.14  & 1.15  \\
    & NAR$(\widehat \Ad)$  & 1.13  & 2.66  & 3.41  & 4.10  \\
500 & LNAR$(\Ad)$          & 1.69  & 2.00  & 2.14  & 2.23  \\
    & LNAR$(\widehat \Ad)$ & 1.73  & 2.91  & 3.60  & 4.32  \\
    & VAR                  & 2.60  & 3.27  & 3.60  & 3.92  \\ \hline
\end{tabular}
\caption{Mean squared error averaged over all components for process \eqref{example.1.process} and the forecast horizons $h=1,\dots,4$.} \label{table.example1}
\end{table}

\FloatBarrier
In the second example, a Separable Temporal Exponential Random Graph Model  (STERGM) is considered, see \citet{krivitsky2014separable} and also \cite{tergm_R} for the used \emph{R} package \emph{tergm}. The network is generated using \emph{simulate.stergm} of the R-package \emph{tergm} with dissolution coefficient $4$,  formation coefficient \linebreak $-\log\Big((d/5-1)(1+\exp(4))-1\Big)$, and a mean density of $5/d$. Networks of the sizes $d=10,33,100$, and $500$ are considered, and for each network size the sample sizes $n=100,200$, and $500$. For $d=500$, such a network has about $4000$ edge changes from $t=1$ to $t=100$. Let $Ad_t$ be the adjacency matrix of such a network at time $t$ and let $B_t=\operatorname{diag}((1/(\ind^\top  Ad_t e_i)_{i=1,\dots,d})$ be a diagonal matrix, where $1/0$ is defined as $0$. This defines the function $G(Ad_t)=Ad_t B_t$ and let $AdG_t=G(Ad_t)=Ad_t^\top B_t$. This means $e_i^\top AdG_t$ apportions equally the weight $1$  among the in-going edges to vertex $i$ at time $t$, and we have $\|AdG_t\|_\infty=\max_i\sum_{s=1}^d Ad_{t;si}/(\sum_{s=1}^d Ad_{t;si})\leq 1$. Then, the process $\Xb$ is given by the following LNAR(1) model
\begin{align}
    X_{t;r}=0.9(r/d) X_{t-1;r}+ 0.9 (d-r)/d AdG_{t-1} X_{t-1}+\eps_{t;r},
    \label{example.2.process}
\end{align}
where $\eps_t \sim \mathcal{N}(\mu,5\Sigma_\eps)$ and $\mu=(1(-1),2(-1)^2,3(-1)^3,\dots,d(-1)^d)^\top$ and $\Sigma_\eps$ is a banded matrix with ones on the diagonal and $0.25(-1)^{j+1},j=1,\dots,d-1$ on the first off diagonal. The function $G$ is considered as known, meaning $AdG_1,\dots,AdG_{n-1}$ is observed. 

Two approaches are used to obtain a forecast  for $Ad_{n},\dots,Ad_{n+h-1}$. The first approach fits a STERGM model to $Ad_1,\dots,Ad_{n-1}$ and then generates $\hat Ad_{n-1+h}^{(h)}$ by simulating the fitted model with $Ad_{n-1}$ as a starting value. This forecast is denoted as $\widehat{Ad}_1$. The second approach fits for each component independently a discrete Markov chain and uses this to forecast $\Ad$. The second approach is denoted as $\widehat{Ad}_2$.

Since a standard VAR model cannot be applied well to a high-dimensional setting, the VAR estimation is modified by adding sparsity constraints, meaning a coefficient $A_{j,rc}$ is set to zero for $j=1,\dots,p$ if $\sum_{t=1}^n e_r^\top AdG_t e_c=0$. This is motivated by the fact that in model \eqref{example.2.process} if $E e_r^\top AdG_t e_c=0$, then $X_{t-j;c},j=1,\dots,p$ do not directly influence $X_{t;r}$.
Furthermore, forecasts given by the \emph{R}-package \emph{BigVAR} are included as additional benchmarks, see \cite{BigVAR}. There the idea is that the underlying VAR model is sparse but the sparsity structure is unknown. It can be  estimated by using a LASSO approach, see among others \cite{basu2015,kock2015oracle}. The forecast obtained by a such a model of order $p$ is denoted by BigVAR$(p)$, where $p=1,2$.

The mean squared errors of the forecasts are displayed in Table~\ref{table.exampel2.10} to Table~\ref{table.exampel2.500}. Note that an optimal one-step ahead forecast would possess a forecast error of $5$. With a known network structure at hand, LNAR is nearly able to achieve such optimal results independently of the dimension. Is the future network unknown, then the forecasting performance of NAR and LNAR drop considerably. Especially for LNAR, the loss of performance due to an unknown network structure seems to increase with the dimension of the process. Of the two approaches used to forecast the network, a better performance is given by the network forecasting approach $\widehat{Ad}_1$ for NAR as well as LNAR. Given the network forecast $\widehat{Ad}_1$, both approaches still outperform all others namely VAR and BigVAR$(p), p=1,2$. As mentioned, VAR  uses the underlying network structure to set sparsity constraints such that the number of parameters can be reduced. In all settings considered, VAR outperforms BigVAR$(p), p=1,2$. This indicates that for this process the network induced sparsity constraints are more helpful than a free but unknown  sparsity setting as given in BigVAR. However, note that the amount of parameters which are estimated for the approaches NAR and VAR depend on the set $\mathcal{I}_n:=\{i,r=1,\dots,d : 1/n\sum_{t=1}^{n-1} e_r^\top|Ad_{t}|e_i>0\},$ and we have $|\mathcal{I}_n|\leq |\mathcal{I}_m|$ for $m\geq n$. This could explain why these two approaches do not gain immediately from an increasing sample size for larger dimensions.  

In order to get a better overview of the results, we set the MSE of the approach VAR as basing point, meaning all MSE values are divided by the corresponding VAR's MSE. An average over all sample sizes and forecast horizons leads to Table~\ref{table.example2.RELMSE}. These aggregated results support the argument that the approaches NAR and LNAR benefit from the underlying network structure and even when the future network structure is unknown and needs to be estimated itself, there is still a benefit. For higher dimensions, only LNAR performs well whereas the performance of all others drop dramatically, meaning that the network induced sparsity does not seem to be sufficient to obtain good estimation results in this setting.

\begin{table}[ht]
\centering
\begin{tabular}{l|rrrr}
  \hline
\multicolumn{1}{r|}{$d$} & 10 & 33 & 100 & 500 \\ 
  \hline
  NAR$(\Ad)$ & 0.35 & 0.33 & 0.44 & 0.51 \\ 
  NAR$(\widehat \Ad_1)$ & 0.71 & 0.74 & 0.78 & 0.77 \\ 
  NAR$(\widehat \Ad_2)$ & 0.84 & 0.93 & 0.95 & 0.96 \\ 
  LNAR$(\Ad)$ & 0.27 & 0.04 & 0.01 & $<0.01$ \\ 
  LNAR$(\widehat \Ad_1)$ & 0.66 & 0.60 & 0.56 & 0.53 \\ 
  LNAR$(\widehat \Ad_2)$ & 0.80 & 0.85 & 0.81 & 0.80 \\ 
  VAR & 1.00 & 1.00 & 1.00 & 1.00 \\ 
  BigVAR$(1)$ & 2.99 & 5.35 & 6.35 & 7.88 \\ 
  BigVAR$(2)$ & 3.24 & 5.71 & 6.68 & 8.09 \\ 
   \hline
   \end{tabular}
\caption{Relative mean squared error (basing point: VAR) for process \eqref{example.2.process} averaged over forecast horizons and sample sizes.} \label{table.example2.RELMSE}
\end{table}

\begin{table}[ht]
\setlength{\tabcolsep}{0.1cm}
\centering
\begin{tabular}{l|l|rrrr|rrrr}
  \hline
  && \multicolumn{4}{|c|}{$d=10$}& \multicolumn{4}{|c}{$d=33$}\\
$n$ & Model & $h=1$ & $h=2$ & $h=3$ & $h=4$ & $h=1$ & $h=2$ & $h=3$ & $h=4$ \\ 
\hline
  & NAR$(\Ad)$ & 6.9E+00 & 1.1E+01 & 1.5E+01 & 1.9E+01 & 3.4E+01 & 7.2E+01 & 1.1E+02 & 1.6E+02 \\ 
   & NAR$(\widehat \Ad_1)$ & 9.7E+00 & 1.8E+01 & 2.7E+01 & 3.6E+01 & 5.6E+01 & 1.3E+02 & 2.1E+02 & 2.9E+02 \\ 
   & NAR$(\widehat \Ad_2)$ & 1.2E+01 & 2.3E+01 & 3.2E+01 & 4.3E+01 & 7.5E+01 & 1.6E+02 & 2.5E+02 & 3.4E+02 \\ 
   & LNAR$(\Ad)$ & 5.1E+00 & 7.1E+00 & 8.5E+00 & 9.6E+00 & 5.3E+00 & 7.2E+00 & 8.5E+00 & 9.4E+00 \\ 
  100 & LNAR$(\widehat \Ad_1)$ & 8.3E+00 & 1.6E+01 & 2.3E+01 & 3.1E+01 & 4.1E+01 & 1.0E+02 & 1.7E+02 & 2.4E+02 \\ 
   & LNAR$(\widehat \Ad_2)$ & 1.1E+01 & 2.1E+01 & 2.9E+01 & 3.8E+01 & 6.7E+01 & 1.5E+02 & 2.3E+02 & 3.1E+02 \\ 
   & VAR & 1.3E+01 & 2.7E+01 & 4.1E+01 & 5.5E+01 & 6.9E+01 & 1.8E+02 & 3.2E+02 & 4.9E+02 \\
   & BigVAR$(1)$ & 4.4E+01 & 6.8E+01 & 8.8E+01 & 1.0E+02 & 5.4E+02 & 7.3E+02 & 8.9E+02 & 1.0E+03 \\ 
   & BigVAR$(2)$ & 4.8E+01 & 7.2E+01 & 9.1E+01 & 1.1E+02 & 5.7E+02 & 7.6E+02 & 9.2E+02 & 1.0E+03 \\ 
   \hline
 & NAR$(\Ad)$ & 5.9E+00 & 8.8E+00 & 1.1E+01 & 1.3E+01 & 2.9E+01 & 5.7E+01 & 8.5E+01 & 1.1E+02 \\ 
   & NAR$(\widehat \Ad_1)$ & 9.4E+00 & 1.8E+01 & 2.8E+01 & 3.7E+01 & 5.7E+01 & 1.2E+02 & 1.9E+02 & 2.6E+02 \\ 
   & NAR$(\widehat \Ad_2)$ & 1.1E+01 & 2.2E+01 & 3.2E+01 & 4.2E+01 & 7.6E+01 & 1.5E+02 & 2.4E+02 & 3.2E+02 \\ 
   & LNAR$(\Ad)$ & 5.2E+00 & 7.2E+00 & 8.3E+00 & 9.2E+00 & 5.1E+00 & 6.9E+00 & 8.0E+00 & 8.7E+00 \\ 
  200 & LNAR$(\widehat \Ad_1)$ & 8.9E+00 & 1.7E+01 & 2.6E+01 & 3.5E+01 & 4.2E+01 & 9.3E+01 & 1.5E+02 & 2.2E+02 \\ 
   & LNAR$(\widehat \Ad_2)$ & 1.1E+01 & 2.1E+01 & 3.0E+01 & 4.0E+01 & 6.6E+01 & 1.4E+02 & 2.2E+02 & 3.0E+02 \\ 
   & VAR & 1.3E+01 & 2.6E+01 & 4.0E+01 & 5.3E+01 & 6.2E+01 & 1.5E+02 & 2.7E+02 & 4.1E+02 \\ 
   & BigVAR$(1)$ & 5.2E+01 & 8.1E+01 & 1.0E+02 & 1.2E+02 & 6.9E+02 & 9.3E+02 & 1.1E+03 & 1.2E+03 \\ 
   & BigVAR$(2)$ & 6.0E+01 & 9.0E+01 & 1.1E+02 & 1.3E+02 & 7.5E+02 & 9.8E+02 & 1.2E+03 & 1.3E+03 \\ 
   \hline
 & NAR$(\Ad)$ & 5.1E+00 & 7.3E+00 & 8.8E+00 & 9.9E+00 & 1.9E+01 & 3.6E+01 & 5.2E+01 & 6.5E+01 \\ 
   & NAR$(\widehat \Ad_1)$ & 9.1E+00 & 1.7E+01 & 2.6E+01 & 3.3E+01 & 4.9E+01 & 1.1E+02 & 1.8E+02 & 2.5E+02 \\ 
   & NAR$(\widehat \Ad_2)$ & 1.1E+01 & 2.0E+01 & 2.9E+01 & 3.8E+01 & 6.5E+01 & 1.4E+02 & 2.2E+02 & 3.0E+02 \\ 
   & LNAR$(\Ad)$ & 5.0E+00 & 7.0E+00 & 8.3E+00 & 9.2E+00 & 5.0E+00 & 6.7E+00 & 7.8E+00 & 8.4E+00 \\ 
  500 & LNAR$(\widehat \Ad_1)$ & 9.0E+00 & 1.7E+01 & 2.6E+01 & 3.3E+01 & 3.9E+01 & 9.5E+01 & 1.6E+02 & 2.3E+02 \\ 
   & LNAR$(\widehat \Ad_2)$ & 1.1E+01 & 2.0E+01 & 2.9E+01 & 3.7E+01 & 5.7E+01 & 1.3E+02 & 2.1E+02 & 2.9E+02 \\ 
   & VAR & 1.2E+01 & 2.3E+01 & 3.5E+01 & 4.5E+01 & 5.4E+01 & 1.4E+02 & 2.5E+02 & 3.6E+02 \\ 
   & BigVAR$(1)$ & 5.0E+01 & 8.6E+01 & 1.1E+02 & 1.3E+02 & 5.1E+02 & 8.6E+02 & 1.1E+03 & 1.3E+03 \\ 
   & BigVAR$(2)$ & 5.6E+01 & 9.2E+01 & 1.2E+02 & 1.3E+02 & 5.7E+02 & 9.2E+02 & 1.2E+03 & 1.3E+03 \\ 
   \hline
\end{tabular}
\caption{Mean squared error for process \eqref{example.2.process} with dimension $d=10, 33$ and the forecast horizons $h=1,\dots,4$.} \label{table.exampel2.10}
\end{table}

\begin{table}[ht]
\setlength{\tabcolsep}{0.1cm}
\centering
\begin{tabular}{l|l|rrrr|rrrr}
  \hline
  && \multicolumn{4}{|c|}{$d=100$}& \multicolumn{4}{|c}{$d=500$}\\
$n$ & Model & $h=1$ & $h=2$ & $h=3$ & $h=4$ & $h=1$ & $h=2$ & $h=3$ & $h=4$ \\ 
 \hline
 & NAR$(\Ad)$ & 2.6E+02 & 5.7E+02 & 9.5E+02 & 1.4E+03 & 5.5E+03 & 1.2E+04 & 2.0E+04 & 2.7E+04 \\ 
   & NAR$(\widehat \Ad_1)$ & 4.2E+02 & 9.5E+02 & 1.5E+03 & 2.1E+03 & 9.5E+03 & 2.3E+04 & 3.7E+04 & 5.1E+04 \\ 
   & NAR$(\widehat \Ad_2)$ & 5.7E+02 & 1.2E+03 & 1.8E+03 & 2.5E+03 & 1.4E+04 & 3.0E+04 & 4.6E+04 & 6.1E+04 \\ 
   & LNAR$(\Ad)$ & 5.2E+00 & 7.3E+00 & 8.4E+00 & 9.3E+00 & 5.3E+00 & 7.2E+00 & 8.3E+00 & 9.1E+00 \\ 
  100 & LNAR$(\widehat \Ad_1)$ & 2.8E+02 & 7.3E+02 & 1.2E+03 & 1.8E+03 & 6.9E+03 & 1.8E+04 & 3.1E+04 & 4.4E+04 \\ 
   & LNAR$(\widehat \Ad_2)$ & 4.8E+02 & 1.1E+03 & 1.7E+03 & 2.4E+03 & 1.2E+04 & 2.8E+04 & 4.3E+04 & 5.9E+04 \\ 
   & VAR & 4.8E+02 & 1.4E+03 & 2.6E+03 & 4.3E+03 & 1.2E+04 & 3.5E+04 & 7.0E+04 & 1.2E+05 \\ 
   & BigVAR$(1)$ & 5.1E+03 & 6.4E+03 & 7.3E+03 & 8.1E+03 & 1.7E+05 & 1.9E+05 & 2.1E+05 & 2.2E+05 \\ 
   & BigVAR$(2)$ & 5.3E+03 & 6.4E+03 & 7.4E+03 & 8.2E+03 & 1.7E+05 & 1.9E+05 & 2.1E+05 & 2.2E+05 \\ 
   \hline
 & NAR$(\Ad)$ & 3.1E+02 & 6.5E+02 & 9.7E+02 & 1.3E+03 & 6.6E+03 & 1.5E+04 & 2.5E+04 & 3.7E+04 \\ 
   & NAR$(\widehat \Ad_1)$ & 5.0E+02 & 1.1E+03 & 1.7E+03 & 2.3E+03 & 1.1E+04 & 2.5E+04 & 4.0E+04 & 5.8E+04 \\ 
   & NAR$(\widehat \Ad_2)$ & 6.8E+02 & 1.4E+03 & 2.1E+03 & 2.8E+03 & 1.4E+04 & 3.0E+04 & 4.8E+04 & 6.8E+04 \\ 
   & LNAR$(\Ad)$ & 5.1E+00 & 6.9E+00 & 8.0E+00 & 8.5E+00 & 5.2E+00 & 6.9E+00 & 7.9E+00 & 8.7E+00 \\ 
  200 & LNAR$(\widehat \Ad_1)$ & 3.0E+02 & 7.3E+02 & 1.3E+03 & 1.9E+03 & 7.3E+03 & 1.8E+04 & 3.1E+04 & 4.5E+04 \\ 
   & LNAR$(\widehat \Ad_2)$ & 5.6E+02 & 1.2E+03 & 1.9E+03 & 2.5E+03 & 1.2E+04 & 2.6E+04 & 4.2E+04 & 5.8E+04 \\ 
   & VAR & 4.8E+02 & 1.3E+03 & 2.5E+03 & 3.9E+03 & 1.1E+04 & 3.3E+04 & 6.3E+04 & 1.0E+05 \\ 
   & BigVAR$(1)$ & 7.8E+03 & 9.4E+03 & 1.1E+04 & 1.1E+04 & 2.2E+05 & 2.4E+05 & 2.5E+05 & 2.7E+05 \\ 
   & BigVAR$(2)$ & 8.3E+03 & 9.8E+03 & 1.1E+04 & 1.2E+04 & 2.3E+05 & 2.5E+05 & 2.6E+05 & 2.7E+05 \\ 
   \hline
 & NAR$(\Ad)$ & 2.6E+02 & 5.7E+02 & 8.1E+02 & 1.0E+03 & 5.6E+03 & 1.4E+04 & 3.2E+04 & 1.1E+05 \\ 
   & NAR$(\widehat \Ad_1)$ & 4.6E+02 & 1.1E+03 & 1.7E+03 & 2.2E+03 & 7.9E+03 & 2.1E+04 & 4.3E+04 & 1.3E+05 \\ 
   & NAR$(\widehat \Ad_2)$ & 5.8E+02 & 1.3E+03 & 1.9E+03 & 2.5E+03 & 1.1E+04 & 2.8E+04 & 5.3E+04 & 1.4E+05 \\ 
   & LNAR$(\Ad)$ & 5.1E+00 & 6.7E+00 & 7.6E+00 & 8.2E+00 & 5.0E+00 & 6.8E+00 & 7.6E+00 & 8.3E+00 \\ 
  500 & LNAR$(\widehat \Ad_1)$ & 3.0E+02 & 8.1E+02 & 1.3E+03 & 1.8E+03 & 6.6E+03 & 1.7E+04 & 2.9E+04 & 4.3E+04 \\ 
   & LNAR$(\widehat \Ad_2)$ & 4.7E+02 & 1.1E+03 & 1.7E+03 & 2.3E+03 & 1.2E+04 & 2.7E+04 & 4.2E+04 & 5.7E+04 \\ 
   & VAR & 4.3E+02 & 1.2E+03 & 2.2E+03 & 3.2E+03 & 1.0E+04 & 2.9E+04 & 5.3E+04 & 8.4E+04 \\ 
   & BigVAR$(1)$ & 5.6E+03 & 8.1E+03 & 9.8E+03 & 1.1E+04 & 2.0E+05 & 2.5E+05 & 2.7E+05 & 2.8E+05 \\ 
   & BigVAR$(2)$ & 6.2E+03 & 8.6E+03 & 1.0E+04 & 1.1E+04 & 2.1E+05 & 2.4E+05 & 2.7E+05 & 2.8E+05 \\ 
   \hline
\end{tabular}
\caption{Mean squared error for process \eqref{example.2.process} with dimension $d=100, 500$ and the forecast horizons $h=1,\dots,4$.} \label{table.exampel2.500}
\end{table}

\FloatBarrier
\section{Real Data Example} \label{net.real.data}
In this section, we investigate further the example in which the actors are economies, their gross domestic product is the attribute of interest and their trade volume defines the underlying network. To elaborate, we consider the data set of \cite{mohaddes2018compilation}. This data set contains economic data of $33$ economies in the time period from 1980-2016. The $33$ economies cover more than $90\%$ of world GDP, see Table~\ref{table.GDP.econ} for a list of included economies. The economies are considered as actors, and the relationship between these actors is given by the IMF (International Monetary Fund) Direction of Trade statistics, see \emph{data.imf.org/DOT} and also the trade matrix in \cite{mohaddes2018compilation}. For time $t$, the connection from actors $i$ to actor $j$ given by $e_j^\top Ad_t^\top e_i$ is defined as the sum of exports and imports between actor $i$ and $j$ at time $t$ divided by the sum of all exports and imports of actor $j$ at time $t$. The data set considered contains for each economy attributes such as real GDP (log transform), inflation rate, short/long-term interest rate. Note that these attributes are given quarterly whereas the trade relations are only given annually. We assume here that the trade relations do not change within a year and perform the analysis on the quarterly sampling level. The focus here is on the attribute real GDP, and based on the data from 1980Q1-2014Q4 the goal is to forecast the GDP for the period 2015Q1-2016Q4. The indices $1,\dots,n$ denote the time period 1980Q1-2014Q4 and $n+1,\dots,n+8$ the time period 2015Q1-2016Q4. To perform a forecast, we use the models presented here, namely NAR given by \eqref{GNAR.eq} and LNAR given by \eqref{eq.LNAR}, and include a VAR model as a benchmark. It is a solid benchmark, since  \citet{marcellino2007comparison} compared a VAR model GDP forecast with various nonlinear alternatives and pointed out that even though a VAR model is a ``simple'' linear model, it can hardly be beaten if it is carefully specified.
Let $\{Y_t \}$ be the real GDP (log transform) of the $33$ economies. Unit root tests applied to $\{Y_t \}$ suggest that real GDP itself may not be stationary. We follow here the economic literature, see among others \cite{marcellino2007comparison}, and model instead the GDP growth rate given by $X_t=Y_t-Y_{t-1}$. This transformation can be inverted, and we obtain a forecast for $\{Y_t\}$ by $\hat Y_{n+h}^{(1)}=Y_{n}+\sum_{s=1}^h\hat X_{n+s}^{(s)}$, where $\hat X_{n+s}^{(s)},s=1,\dots,h$ denote forecasts of $\{X_t\}$. 
The three models considered have a tuning parameter $p$ which specifies the lag order. For all three models,  the Bayesian Information Criterion (BIC) is used to choose automatically the lag-order $p$. The models NAR and LNAR require a forecast of the underlying network. A simple approach for this is used, namely $\hat Ad_{n+h}^{(h)}=Ad_n, h=1,\dots,8$ is used as a forecast.  We obtain for a forecast $\hat Y_{n+h}^{(h)}$ the forecast error $E_{n+h}^{(h)}:=Y_{n+h}-Y_{n+h}^{(h)}$. The squared error and the absolute error is used to measure the forecast performance. In Table~\ref{table.TOTAL.error}, the sum of all squared and absolute errors is displayed, meaning $\sum_{h=1}^8 \|E_{n+h}^{(h)}\|_2^2$ and $\sum_{h=1}^8 \|E_{n+h}^{(h)}\|_1$. Over all forecast horizons and economies, the additional network structure improves the forecast, and NAR performs best. NAR's forecast error is $25\%$ for the squared error and $15\%$ for the absolute error smaller than VAR's forecast error.
\begin{table}[ht]
\centering
\begin{tabular}{rrrr}
  \hline
 & VAR & LNAR & NAR \\ 
  \hline
Square error & 0.23 & 0.21 & 0.17 \\ 
  Absolute error & 5.21 & 4.87 & 4.39 \\ 
   \hline
\end{tabular}
\caption{Sum of the squared error and absolute error, respectively, over all $33$ economies and all $8$ forecast horizons for the GDP forecast.} \label{table.TOTAL.error}
\end{table}

Table~\ref{table.GDP.econ} breaks the forecast error down  into economies, and Table~\ref{table.GDP.horizon} breaks it down into forecast horizons.  The performance gap between NAR and LNAR is small for small $h$, and it increases with increasing $h$. Since for $h=8$ LNAR and VAR almost have the same performance and NAR outperforms both, it seems that LNAR performs worse with increasing $h$. Taking a closer look at Table~\ref{table.GDP.econ}, we cannot identify a clear winner. For the $33$ listed economies VAR performs independently of the used error measure 11 times best, LNAR 6 times and NAR 16 times.

To sum up, the trade network delivers useful information for the GDP forecast. The models presented in this paper are able to benefit from these additional information such that they can outperform the VAR approach. Note that an NAR$(p)$ model possesses the same amount of parameters as a VAR$(p)$ model. Thus, the additional information can be used without estimating additional parameters.

\begin{table}[ht]
\centering
\begin{tabular}{l|rrr|rrr}
  \hline
  & \multicolumn{3}{|c|}{Squared error ($\times$100) } & \multicolumn{3}{c}{Absolute error}\\
 & VAR & LNAR & NAR & VAR & LNAR & NAR \\ 
  \hline
USA & 0.114 & 0.147 & 0.033 & 0.079 & 0.095 & 0.044 \\ 
  UNITED KINGDOM & 0.057 & 0.107 & 0.079 & 0.062 & 0.088 & 0.067 \\ 
  AUSTRIA & 0.133 & 0.211 & 0.404 & 0.096 & 0.124 & 0.165 \\ 
  BELGIUM & 0.094 & 0.097 & 0.085 & 0.079 & 0.077 & 0.077 \\ 
  FRANCE & 0.087 & 0.039 & 0.004 & 0.074 & 0.050 & 0.016 \\ 
  GERMANY & 0.032 & 0.056 & 0.329 & 0.048 & 0.062 & 0.148 \\ 
  ITALY & 0.045 & 0.004 & 0.038 & 0.054 & 0.016 & 0.046 \\ 
  NETHERLANDS & 0.010 & 0.035 & 0.013 & 0.024 & 0.048 & 0.026 \\ 
  NORWAY & 0.701 & 0.508 & 0.155 & 0.217 & 0.179 & 0.096 \\ 
  SWEDEN & 0.185 & 0.250 & 0.134 & 0.106 & 0.127 & 0.090 \\ 
  SWITZERLAND & 0.188 & 0.268 & 0.477 & 0.118 & 0.140 & 0.184 \\ 
  CANADA & 0.401 & 0.449 & 0.146 & 0.168 & 0.179 & 0.104 \\ 
  JAPAN & 0.101 & 0.013 & 0.085 & 0.078 & 0.030 & 0.076 \\ 
  CHINA & 1.099 & 0.905 & 0.321 & 0.257 & 0.235 & 0.131 \\ 
  FINLAND & 0.268 & 0.153 & 0.036 & 0.142 & 0.110 & 0.047 \\ 
  SPAIN & 0.086 & 0.043 & 0.939 & 0.078 & 0.054 & 0.246 \\ 
  TURKEY & 0.326 & 0.450 & 1.330 & 0.123 & 0.182 & 0.248 \\ 
  AUSTRALIA & 0.098 & 0.069 & 0.038 & 0.074 & 0.059 & 0.046 \\ 
  NEW ZEALAND & 0.055 & 0.095 & 0.441 & 0.056 & 0.073 & 0.157 \\ 
  SOUTH AFRICA & 0.711 & 1.063 & 0.059 & 0.210 & 0.254 & 0.064 \\ 
  ARGENTINA & 0.486 & 0.782 & 0.981 & 0.176 & 0.207 & 0.233 \\ 
  BRAZIL & 6.896 & 8.437 & 7.244 & 0.683 & 0.748 & 0.700 \\ 
  CHILE & 0.977 & 1.180 & 0.382 & 0.240 & 0.251 & 0.149 \\ 
  MEXICO & 0.049 & 0.019 & 0.184 & 0.058 & 0.034 & 0.111 \\ 
  PERU & 0.057 & 0.017 & 1.143 & 0.058 & 0.028 & 0.271 \\ 
  SAUDI ARABIA & 0.286 & 0.243 & 0.357 & 0.145 & 0.116 & 0.164 \\ 
  INDIA & 0.060 & 0.093 & 0.085 & 0.064 & 0.079 & 0.074 \\ 
  INDONESIA & 0.006 & 0.059 & 0.037 & 0.020 & 0.061 & 0.048 \\ 
  KOREA & 1.700 & 0.256 & 0.008 & 0.334 & 0.129 & 0.019 \\ 
  MALAYSIA & 0.510 & 0.404 & 0.122 & 0.184 & 0.160 & 0.071 \\ 
  PHILIPPINES & 1.442 & 0.868 & 0.424 & 0.302 & 0.225 & 0.154 \\ 
  SINGAPORE & 4.251 & 2.714 & 0.315 & 0.544 & 0.427 & 0.156 \\ 
  THAILAND & 1.020 & 0.767 & 0.367 & 0.260 & 0.228 & 0.163 \\ 
   \hline
   \end{tabular}
   \caption{For a given economy the sum of squared error and absolute error, respectively, of the GDP forecast for the entire forecast period 2015Q1-2016Q4} \label{table.GDP.econ}
\end{table}

\begin{table}[ht]
\centering
\begin{tabular}{l|rrr|rrr}
  \hline
  & \multicolumn{3}{|c|}{Squared error ($\times$100) } & \multicolumn{3}{c}{Absolute error}\\
$h$ & VAR & LNAR & NAR & VAR & LNAR & NAR \\ 
  \hline
1 & 0.23 & 0.17 & 0.15 & 0.20 & 0.18 & 0.17 \\ 
  2 & 0.95 & 0.69 & 0.68 & 0.40 & 0.35 & 0.34 \\ 
  3 & 1.31 & 1.05 & 0.98 & 0.48 & 0.41 & 0.40 \\ 
  4 & 2.18 & 1.87 & 1.49 & 0.63 & 0.55 & 0.50 \\ 
  5 & 2.87 & 2.58 & 2.07 & 0.70 & 0.66 & 0.58 \\ 
  6 & 3.81 & 3.69 & 2.75 & 0.82 & 0.80 & 0.68 \\ 
  7 & 5.27 & 4.95 & 4.40 & 0.98 & 0.93 & 0.85 \\ 
  8 & 5.90 & 5.79 & 4.28 & 1.00 & 0.99 & 0.87 \\ 
   \hline
\end{tabular}
\caption{For a given forecast horizon the sum of the squared error and absolute error, respectively, over all $33$ economies.} \label{table.GDP.horizon}
\end{table}

\FloatBarrier
\section{Conclusions}
This paper models dynamic attributes of the vertices of a dynamic network. The attributes are modeled such that the underlying network structure can influence the attributes and vice versa. A linear time series framework is adopted and network linear processes and network autoregressive processes were defined. This framework gives flexibility in the sense that the attributes and the underlying network can be modeled separately. The physical dependence framework is used to quantify the dependency structure of the underlying network such that this framework becomes feasible and statistical results can be derived in a low- and high-dimensional setting.
These results can be used to do forecasting, and, as can be seen in the numerical examples as well as in the real data example, the benefit of using the additional structure can be quite large.
\FloatBarrier

{\bf Acknowledgments.}  The author is grateful to the editor, an associate editor and one referee for their valuable and insightful comments that led to a considerably improved manuscript. The research of the author was supported by the Research Center (SFB) 884 ``Political Economy of Reforms''(Project B6), funded by the German Research Foundation
(DFG). Furthermore, the author acknowledges support by the state of Baden-W{\"u}rttemberg through bwHPC.

\section{Proofs} \label{net.proofs}
\begin{proof}[Proof of Lemma~\ref{lem.solution}]
Under Assumption~\ref{ass.stat}a), we have that $\|G_j(\cdot)\|_{\max}\leq 1$ implies $\|\widetilde{G(\cdot)}\|_{\max}=1$, where $\widetilde{G(\cdot)}$ denotes the corresponding quantity of the stacked processes. Thus, for $j \in \N$ and $t \in \Z$, we have $\|\prod_{s=1}^j ( \tilde A \Mm \widetilde{G(Ad}_{t-s})\|_2 \leq \| |\tilde A|^j\|_2$. 
The condition $\det(I-\sum_{j=1}^p |A_j| z^j)\not = 0$ for all $|z|\leq 1$ implies $\det(I- \tilde A z^j)\not = 0$ for all $|z|\leq1$, which gives  component-wise summability of the sequence $\sum_{j=0}^\infty |A|^j$, see Appendix A.6 and A.9 in \cite{luetkepohl2007new}.
Under Assumption~\ref{ass.stat}b), we have also by the results of A.9 in \cite{luetkepohl2007new} that 
$\sum_{j=0}^\infty \tilde A \Mm \tilde G(\cdot)$ is absolutely component-wise summable. 

Let $\mathds{X}_t=(\tilde A \Mm \widetilde{G(Ad)}_{t-1}) \mathds{X}_{t-1}+(e_1 \otimes I_d) \eps_t$ be the stacked NAR$(1)$ process, where \eqref{eq.NAR.MAinfty} takes the form $\mathds{X}_t=\sum_{j=0}^\infty \prod_{s=1}^j (\tilde A \Mm G(\widetilde{Ad}_{t-s})) ( e_1 \otimes I_d) \eps_{t-j}$.  This is obviously a solution of the recursion equality. Given \eqref{eq.NAR.MAinfty}, the representation of the autocovariance function follows directly by taking into account that $\{\eps_t\}$ is an i.i.d. sequence and $\{\eps_s, s>t\}$ and $\{\Ad_s, s\leq t\}$ are independent for all $t$.
\end{proof}

\begin{proof}[Proof of Lemma \ref{lem1.DSLP}]
Condition (\ref{ass.5a}) and (\ref{ass.5b}) gives the existence of the $L_2$-Limit of $X_t$, so that it can be written as $X_{t} = \sum_{j=0}^\infty B_{t,j} \eps_{t-j}$. We have $B_{t,j}=f_j(Ad_{t-1},\dots,Ad_{t-j})$, and $\{ \eps_{t}, t \in \Z\}$ is i.i.d and independent to the stationary process $\Ad$. Thus, $\{\eps_{t}, t \in \Z\}$ and $(\veco(B_{t,j},j\in \N))_{t\in \Z}$ are independent.  We have  $\mu_x = \sum_{j=0}^\infty E B_{0,j} \mu$, and for the autocovariance function
\begin{align*}
\Gamma_X(h) &= \sum_{j=0}^\infty \sum_{s=0}^\infty  \left( E \left( B_{t+h,j} \eps_{t+h-j} \eps_{t-s}^\top B_{t,s}^\top \right) - E \left( B_{t+h,j} \mu \mu^\top B_{t,s} \right) \right. + \\  &\quad\left.
  E \left( B_{t+h,j} \mu \mu^\top B_{t,s} \right) - E  (B_{t+h,j}) \mu \mu^\top E (B_{t,s}^\top) \right) \\
&=\sum_{s=0}^\infty E \left(B_{h,s+h} \Sigma_\eps B_{0,s}^\top\right) +\sum_{j=0}^\infty \sum_{s=0}^\infty \cov\left( B_{h,j} \mu, B_{0,s} \mu\right), h \geq 0.
\end{align*}
\end{proof}

\begin{proof}[Proof of Theorem~\ref{thm.cons.NAR}]
First note that we have for $n$ large enough that  \linebreak 
$\sum_{t=p}^{n-1} \Yr_t (\Yr_t)^\top - 1/(n-p) \sum_{t_1,t_2=p}^{n-1} (\Yr_{t_1})  (\Yr_{t_2})^\top$ is invertible due to $\Gamma_{\Yr}(0)^{-1}$ is positive definite. Then, we insert $X_{t;r}= (\Yr_{t-1})^\top w_r+\eps_{t;r}$ in \eqref{eq.sys.est} and obtain
\begin{align*}
    \hat w_r =& w_r + [\sum_{t=p}^{n-1} \Yr_t (\Yr_t)^\top - \frac{1}{n-p} \sum_{t_1,t_2=p}^{n-1} (\Yr_{t_1})  (\Yr_{t_2})^\top]^{-1} \\
    & [\sum_{t=p+1}^n \eps_{t;r} \Yr_{t-1} - \frac{1}{n-p} \sum_{t_1,t_2=p+1}^n \Yr_{t_1-1} \eps_{t_2;r} ]
\end{align*}
and
$\hat \mu_r = 1/(n-p) \sum_{t=p+1}^n (\Yr_{t-1})^\top (w_r - \hat w_r)+1/(n-p)\sum_{t=p+1}^n \eps_{t;r}$. The $\sqrt{n}$-consistency of the estimators follows then by Assumption~\ref{ass.conv}. 
Since $\sqrt n/(n-p)\sum_{t=p+1}^n(\Yr_{t-1})^\top (w_r-\hat w_r)= \mu_{\Yr}(w_r-\hat w_r) + o_P(1)$, we have the second assertion by Assumption~\ref{ass.norm}.
\end{proof}

\begin{proof}[Proof of Lemma~\ref{lem.physical}]
To simplify notation, let $AdG_t=\widetilde{G(Ad}_t)$.
Since Assumption~\ref{ass.stat}a) gives a causal representation, see Lemma~\ref{lem.solution}, we have for component $r$ that  $X_{j;r}=\sum_{s=0}^\infty  e_r^\top B_{j,s} \eps_{j-s}=H_r(\Xi_j, \Xi_{j-1},\dots)$ for some measurable function $H_r$. Note that $B_{j,0} \equiv I_p$ and $\|B_{j,s}\|\leq \||\tilde A|^s\|_2$ for all $j$. Denote by $^*$ a coupled version with $\Xi_0$ being replaced by an i.i.d. copy $\Xi_0^\prime$. Then, $\delta_q(\{X_{t;r}, t\in\Z\},j)=\|X_{j;r}-H_r(\Xi_j,\Xi_{j-1},\dots,\Xi_1,\Xi_0^\prime,\Xi_{-1},\Xi_{-2},\dots)\|_{E,q}=\|X_{j;r}-X_{j;r}^*\|_{E,q}$.   We have by triangular inequality and Cauchy-Schwarz for $j\geq1$
\begin{align*}
\delta_q&(\{X_{t;r}\},j)=\|X_{j;r}-X_{j;r}^*\|_{E,q}
=\| \sum_{s=1}^\infty e_r^\top(B_{j,s}-B_{j,s}^*) \eps_{j-s} + e_r B_{j,j}^*(\eps_0-\eps_0^\prime)\|_{E,q} \\
&\leq \sum_{s=1}^\infty \sum_{i=1}^d \| (B_{j,s;ri}-B_{j,s;ri}^*) \eps_{j-s;i}\|_{E,q}+\|e_r B_{j,j}^*(\eps_0-\eps_0^\prime)\|_{E,q}\\
&\leq \sum_{s=1}^\infty \sum_{i=1}^d \| (B_{j,s;ri}-B_{j,s;ri}^*)\|_{E,2q}\| \eps_{j-s;i}\|_{E,2q}+\max_{i} \|E_{0;i}\|_{E,2q} \sqrt{d} \||\tilde A|^j\|_2.
\end{align*}
Let $D_k=AdG_{j-k}-AdG_{j-k}^*$. Since $\Ad_G$ is $2q$-stable, we have that \linebreak $\max_{i,s}\|e_i^\top D_k e_s\|_{E,2q}\leq \delta_{2q}(\Ad_G,j-k)$. Note that $\Ad_G$ is a causal process and $AdG_t-AdG^*_t=0$ for all $t<0$.
Furthermore, we have by Assumption~\ref{ass.stat}a)
\begin{align*}
 \| e_r^\top (B_{j,s}-B_{j,s}^*) \ind\|_{E,2q}=&
\| e_r^\top \Big( \sum_{k=1}^s \prod_{r=1}^{k-1} (\tilde A \Mm AdG_{j-k}^*) (\tilde A \Mm (AdG_{j-k}-AdG_{j-k}^*)) \\
&\quad \times \prod_{r=k+1}^s (\tilde A \Mm AdG_{j-r})\Big) \ind \|_{E,q}
\\&\leq \sqrt {d} \| |\tilde A|^s\|_2 \sum_{k=1}^{\min(j,s)} \delta_{2q}(\Ad_G,j-k).   
\end{align*}
With this, we have further
$$
\delta_q(\{X_{t}\},j)\leq \max_{i} \|\eps_{0;i}\|_{E,2q} \sqrt{d} \Big[ \sum_{s=1}^\infty \| |\tilde A|^s\|_2 \sum_{k=1}^{\min(j,s)} \delta_{2q}(\Ad_G,j-k)+\| |\tilde A|^j\|_2\Big].
$$
Similarly, we obtain 
$\max_r \|X_{j;r}\|_{E,q}\leq \max_i \|\eps_{0,i}\|_{E,2q} \sqrt d  \sum_{s=0}^\infty \| | \tilde A\|^s\|_2<\infty.$
Furthermore, let $C=\sum_{j=1}^\infty \||\tilde A|^j\|_2$. Since the components are absolutely summable, see Lemma~\ref{lem.solution}, we have $C<\infty$. This gives us
\begin{align*}
    \sum_{j=1}^\infty& \delta_q(\{X_{t},t\in \Z\},j)\leq \max_{i} \|\eps_{0;i}\|_{E,2q} \sqrt{d} (\sum_{j=1}^\infty \sum_{s=1}^\infty \| |\tilde A|^s \|_2 \!\!\!\!\!\! \sum_{k=1}^{\min(j,s)} \!\!\!\! \delta_{2q}(\Ad_G,j-k) + C) \\
    &\leq \max_{i} \|\eps_{0;i}\|_{E,2q} \sqrt{d} (\sum_{j=0}^\infty \delta_{2q}(\Ad_G,j) \sum_{s=0}^\infty \| |\tilde A|^s \|_2 \sum_{k=0}^\infty  |\tilde A|^k \|_2  + C) < \infty.
\end{align*}
Hence, the assertion that $\Xb$ is $q$-stable follows.

Let $\tilde k=k+(s-1)d, k=1,\dots,d, s=1,\dots,p$. Since $\Ad_G$ is stationary and $2q$-stable, we have $\ind(1/n \sum_{t=p+1}^n e_r^\top |G_s(Ad_{t-s})| e_k>0)=\ind( E(e_k^\top |G_s(Ad_{-s})|e_r>0)>0)+o_P(1)$, where $\ind(\cdot)$ is the indicator function, meaning $\ind(x)=1$ if $x$ is true and zero otherwise. This implies, as $n\to \infty$, $|\mathcal{I}^{(r)}|\overset{P}{\to} |\mathcal{I}^{(r)}_E|<\infty.$ Consequently, for $n$ large enough, we have $\mathcal{I}^{(r)}_E=\mathcal{I}^{(r)}$. Suppose in the following that this hold. 

A $q$-stable $\Xb$ implies that $\{(Y_{t-1}^{(r)}=I_{dp;I^{(r)}}( (e_r^\top G_1(Ad_{t-1})) \Mm X_{t-1},\dots,\linebreak (e_r^\top G_p(Ad_{t-p})) \Mm X_{t-p})^\top)_{r=1,\dots,d}, t \in \Z\}$ is $q$-stable. To see this, let $\tilde k=k+(s-1)d, \tilde k \in \mathcal{I}^{(r)}$.
Then, $e_{\tilde k}^\top Y_{t-1}^{(r)}=X_{t-s;k}e_r^\top G_s(Ad_{t-j}) e_k$ and $\| e_{\tilde k} (Y_j^{(r)}-Y_j^{(r)*})\|_{E,q}=\| (X_{j-s;k}-X_{j_s;k}^*) e_r^\top G_s(Ad_{j-s})e_k+X_{j-s;k}^*(e_r^\top G_s(Ad_{j-s})e_k -\linebreak e_r^\top G_s(Ad_{j-s}^*)e_k)\|_{E,q}\leq \delta_q(\Xb;k-j)+\delta(\Ad_G;k-j)$.
Since $\{X_t=\sum_{j=0}^\infty B_{t,j} \eps_{t-j}\}$ is a causal process, and $\{\eps_s, s> t\}$ and $\{Ad_s, s\leq t\}$ are independent for all $t$, we have that $Y_t^{(r)}$ and $\eps_{t+1}$ are independent for all $t$ and $r$. Hence, the i.i.d. structure of $\{\eps_t\}$ implies that $\{(Y_t^{(r)}-\mu_\Yr)(\eps_{t+1;r}-\mu_r),t\in \Z\}$ is $q$-stable and centered. This ensures that Theorem 3 in \cite{wu2011asymptotic} can be applied, see also Proposition 3 in \cite{wu2007strong}. This gives us that Assumption~\ref{ass.conv}(2.) follows.

Let $\tilde k_1=k_1+(s_1-1)d, \tilde k_2=k_2+(s_1-1)d \in I^{(r)}$. Then, for all $t\in \Z$
$\cov(
e_{\tilde k_1}^\top((e_r^\top G_1(Ad_{t-1})) \Mm X_{t-1},\dots,(e_r^\top G_p(Ad_{t-p})) \Mm X_{t-p})^\top,e_{\tilde k_2}^\top((e_r^\top G_1(Ad_{t-1})) \linebreak \Mm X_{t-1},\dots,(e_r^\top G_p(Ad_{t-p})) \Mm X_{t-p})^\top)=\cov(X_{-s_1;k_1} e_{r}^\top G_{s_1}(Ad_{-s_1}) e_{k_1},X_{-s_2;k_2} \linebreak e_{r}^\top G_{s_2}(Ad_{-s_2}) e_{k_2})=: e_{\tilde k_1}^\top I_{dp;\mathcal{I}^{(r)}}^\top \Gamma_{\Yr}(0) I_{dp;\mathcal{I}^{(r)}} e_{\tilde k_2}$. Since $\var(\eps_1)=\Sigma_\eps$ is positive definite, which implies $\var (X_1)$ is positive definite, and $E e_r^\top |G_s(Ad_1)|e_k>0$, $\Gamma_{Yr}(0)$ is positive definite. Hence, the $q$-stability and Theorem 3 in \cite{wu2011asymptotic} gives Assumption~\ref{ass.conv}(1.). Furthermore, Theorem 3 in \cite{wu2011asymptotic} in connection with the Crámer-Wold device gives Assumption~\ref{ass.norm}.
\end{proof}

\begin{proof}[Proof of Lemma~\ref{lem.stat.LNAR}]
We have 
$$\rho(|\tilde A \Mm \tilde G(\cdot)|)\leq \||\tilde A \Mm \tilde G(\cdot)|\|_\infty=\max(1,\max_{r=1,\dots,p} \sum_{j=1}^p |\alpha_{j,r}| + | \beta_{j,r}| \|G_j(\cdot)\|_\infty) = 1.$$
Suppose $v=(v_1^\top,\dots,v_p^\top)^\top \in \C^{dp}, v_k \in \C^d $ is an eigenvector of $|\tilde A \Mm \tilde G(\cdot)|$ to eigenvalue $\lambda$. Due to the special structure of $\tilde A \Mm \tilde G(\cdot)$ we have the following equations
$\sum_{j=1}^p |A_{j,\alpha,\beta} \Mm G_j(\cdot)| v_j = \lambda v_1$ and $v_k=\lambda v_{k+1}, k=1,\dots,p-1$. Following the proof of Theorem 1 in \cite{knight2019generalised} we obtain 
$\sum_{j=1}^p |A_{j,\alpha,\beta} \Mm G_j(\cdot)|\lambda^{-j} v_p =  v_p$. We have $0=\|\sum_{j=1}^p |A_{j,\alpha,\beta} \Mm G_j(\cdot)|\lambda^{-j} v_p\|_1-\|v_p\|_1\leq (|\lambda|^{-p} C_\lambda-1) \|v_p\|_1$. Hence, we obtain for an eigenvalue $|\lambda|^{p}\leq C_\lambda$. Since $C_\lambda<1$ this implies $\rho(|\tilde A \Mm \tilde G(\cdot)|)<1$. Furthermore, note that this   implies Assumption~\ref{ass.stat}b), which gives stationarity by Lemma~\ref{lem.solution}.
\end{proof}

\begin{proof}[Proof of Lemma~\ref{lem.physical.LNAR}]
To simplify notation, let $AdG_t=\widetilde{G(Ad}_t)$. Similarly as in the proof of Theorem~\ref{lem.physical}, we have
\begin{align*}
    \delta_q&(\{\max _r X_{\cdot;r}\},j)=\|\max_r X_{j;r}-H_r(\Xi_j,\Xi_{j-1},\dots,\Xi_1,\Xi_0^\prime,\Xi_{-1},\Xi_{-2},\dots)\|_{E,q}\\
    &\leq \sum_{s=1}^\infty \|\max_r e_r ^\top (B_{j,s}-B_{j,s}^*) \eps_{j-s}\|_{E,q}+\max_{i} \|\eps_{0,i}\|_{E,2q} \|\max_r e_r^\top B_{j,j}^*\ind \|_{E,2q}.
\end{align*}
Note that for $j \in Z,s \in \N$, we have  $B_{j,s}=B_{j,s-1}B_{j-s,1}=B_{j-1,1} B_{j-1,s-1}$. This gives us $B_{j,s}-B_{j,s}^*=\sum_{k=1}^s B_{j,k-1} ( B_{j-k,1} - B_{j-k,1}^*) B_{j-k,s-k}^*$. Note that $B_{k,1}=B_{k,1}^*$ for all $k<0$. Hence,  
\begin{align*}
    \|\max_r& e_r ^\top (B_{j,s}-B_{j,s}^*) \eps_{j-s}\|_{E,q}    \leq \max_{i} \| \eps_{0,i}\|_{E,2q}  \sum_{k=1}^{\min(s,j)} \| |\tilde A_{\alpha,\beta} \Mm \tilde G(\cdot)|^{k-1}\|_\infty \\
    & \times
    \|\max _r e_r^\top|\tilde A_{\alpha,\beta} \Mm (\widetilde{G(Ad})_{j-k}-\widetilde{G(Ad}^*)_{j-k})|\ind\|_{E,2q} \| |\tilde A_{\alpha,\beta} \Mm \tilde G(\cdot)|^{s-k}\|_\infty 
    \\
    &
    \leq \max_{i} \| \eps_{0,i}\|_{E,2q} \sum_{k=1}^{\min(s,j)} C_A^2/C_\lambda C_\lambda^{s} \delta_{2q}(\Ad_{G\ind},j-k) ,
\end{align*}
due to $\sum_{u=1}^p |\alpha_u|+|\beta_u|\leq 1$, $\|\max_u \max_r e_r^\top G_u(Ad_j)-G_u(Ad_j^*)\ind\|_{E,2q}\leq \delta_2q(Ad_G,j)$,
 and $\||\tilde A_{\alpha,\beta} \Mm \tilde G(\cdot)|^j\|_\infty \leq C_A C_\lambda^j$. Hence,

\begin{align*}
    \sum_{j=1}^\infty & \delta_q(\{\max_r X_{\cdot;r}\},j)
\leq \max_{i} \| \eps_{0,i}\|_{E,2q} C_A \sum_{j=1}^\infty \Big[\frac{C_A}{C_\lambda}\sum_{s=1}^\infty   \sum_{k=1}^{\min(s,j)} C_\lambda^{s} \delta_{2q}(\Ad_{G\ind},j-k) \\& + C_\lambda^j\Big]=\max_{i} \| \eps_{0,i}\|_{E,2q} C_A/(1-C_\lambda)(C_A/(C_\lambda(1-C_\lambda)) \sum_{j=0}^\infty \delta_{2q}(\Ad_{G\ind},j)+1).
\end{align*}

Since the constants do not depend on $d$, this is finite and $\Xb$ is $q$-stable for any $d$. 

For $\{Y_t^{(r)}, t \in \Z\}$, we have for $s=1,\dots,p, e_{2(s-1)+1}^\top Y_t^{(r)}= X_{t-s;r}$ and $e_{2(s-1)+2}^\top Y_t^{(r)}=e_r G_s(Ad_{t-s}) X_{t-s}$. Hence, for $s=1,\dots,p,j=1,2,$ we have
\begin{align*}
    \delta_{q}&(\{\max_r  e_{2(s-1)+j}^\top Y_{t}^{(r)}, t \in \Z\},j) \leq \delta_q(\{\max_r X_{t;r}, t \in \Z\},j-s)\\
    &+\| \max_r  e_r^\top (G_s(Ad_{j-s})-G_s(Ad_{j-s}^*)) X_{t-j}^*\|_{E,q}\\
    &\leq \delta_q(\{\max_r X_{\cdot;r}\},j-s)+  \delta_q(\Ad_{G\ind},j-s) \max_i \| X_{0,i}\|_{E,2q}.
\end{align*}
Note that $\max_i\| X_{0,i}\|_{E,2q}\leq C_A/(1-C_\lambda) \max_i \| \eps_{0,i}\|_{E,2q}$. Thus, we have for $k=1,\dots,2p,$
 \begin{align*}
     \sum_{j=0}^\infty& j \delta_q(\{\max_r  e_k^\top Y_{\cdot}^{(r)}\},j)\leq 
     \max_i \| \eps_{0,i}\|_{E,2q} C_A \Big[ \sum_{j=1}^\infty j C_\lambda^j + \fraco{1-C_\lambda}\sum_{j=1}^\infty j \delta_q(\Ad_{G\ind},j) \\
     &+\sum_{j=1}^\infty j \sum_{s=1}^\infty \sum_{k=1}^{\min(j,s)} C_A C_\lambda^{s-1} \delta_{2q}(\Ad_{G\ind},j-k)\Big] \\
     \leq& \max_i \| \eps_{0,i}\|_{E,2q} \frac{C_A}{1-C_\lambda}\Big[\frac{C_\lambda}{1-C_\lambda}+C_G+\sum_{k=1}^\infty \sum_{j=0}^\infty C_A  C_\lambda^{k-1} \delta_{2q}(\Ad_{G\ind},j) (j+k) \Big] \\
\leq& \max_i \| \eps_{0,i}\|_{E,2q} \frac{C_A}{1-C_\lambda}\Big[\frac{C_\lambda}{1-C_\lambda}+C_G(1+\frac{C_A(2-C_\lambda)}{(1-C_\lambda)^2})\Big] .
 \end{align*}
\end{proof}

\begin{proof}[Proof of Theorem~\ref{thm.LNAR}]
Let $\tilde \eps_{t;r}=\eps_{t;r}-1/(n-p)\sum_{l=p+1}^n \eps_{l;r}$ and $\tilde Y_t^{(r)}=\Yr-1/(n-p)\sum_{l=p+1}^n \Yr$. Furthermore, let $\hat \Gamma=[1/(n-p)\sum_{t=p}^{n-1} \tilde Y_t^{(r)}  (\tilde Y_t^{(r)})^\top]$ and $\var(Y_1^{(r)})=\Gamma$. Note that $\tilde Y_t^{(r)}$ is a $2p$ dimensional vector with $p$ independently from the dimension $d$. 
Let further $\hat \gamma:=1/(n-p)\sum_{t=p+1}^n \tilde \eps_{t;r} \tilde Y_{t-1}^{(r)}$ and $v:=(\hat w_r - w_r).$ Then, the linear system \eqref{eq.sys.est} gives 
$$
\|v\|_2 \sqrt{2p}  | \frac{v^\top \hat \gamma}{\|v\|_1}| \geq v^\top \hat \gamma= v^\top \Gamma v + v^\top (\hat \Gamma - \Gamma ) v\geq \|v\|_2^2(\rho(\Gamma^{-1})- \frac{2p}{\|v\|_1^2} |v^\top (\hat \Gamma- \Gamma)v|).  
$$
Hence, $\|v\|_2 \leq \sqrt{2p} |v^\top \hat \gamma/\|v\|_1|/(\rho(\Gamma^{-1})-2p(v^\top(\hat \Gamma-\Gamma)v/\|v\|_1^2)$. We use the Nagaev inequality for dependent random variables, see Theorem 2 in \cite{liu2013probability}, to bound $\|\hat \gamma\|_2$ and $v^\top (\hat \Gamma - \Gamma ) v/\|v\|_1^2$. For some process $X$, let $$\nu_q(X):=\sum_{j=1}^\infty (j^{q/2-1} \delta_q(X,j)^q)^{1/q+1}\leq \sum_{j=1}^\infty j \delta_q(X,j).$$ Note that for some vector $v \in \R^{2p}$ with $\|v\|_1=1$, we have $\delta_{q}(\{v^\top Y_{\cdot}^{(r)}\},j)=\| v^\top(Y_j^{(r)}-Y_j^{(r)*})\|_{E,q}\leq \sum_{s=1}^{2p} |v_s| \delta_q(\{Y_{\cdot;s}^{(r)}\},j)\leq \max_{s=1,\dots,2p} \delta_q(\{Y_{\cdot;s}^{(r)}\},j)$. Hence, $\nu_q(\{v^\top Y_{\cdot}^{(r)}\})\leq C_{\delta Y}$. Furthermore, we have 
$\delta_q(\{\eps_{t;r} v^\top  Y_{t-1}^{(r)}, t \in \Z\},j)=\|\eps_{0;r}\|_{E,q}\delta_q(\{v^\top Y_{\cdot}^{(r)}\},j-1)$.
Hence, $\nu(\{\eps_{t;r}v^\top Y_{t-1}^{(r)}, t \in \Z\})\leq \|\eps_{0;r}\|_{E,q} C_{\delta Y}.$
Since $\|a^2-b^2\|_{E,q}\leq \|a-b\|_{E,2q}(\|a\|_{E,2q}+\|b\|_{E,2q})$ and $\max_s\|Y_{0;s}^{(r)}\|_{E,2q}\leq C_A/(1-C_\lambda)$, we have 
$\nu_q(\{(v^\top Y_{\cdot}^{(r)})^2\})\leq C_{\delta Y} 2 C_A/(1-C_\lambda)$.

We have by Theorem 2 in \cite{liu2013probability} and the remark thereafter for some vector $v \in \R^{2p}, \|v\|_1=1$ and some $y \in \R$
\begin{align*}
P&(|\frac{1}{n-p} \sum_{t=p+1}^n  \eps_{t;r} v^\top(Y_{t-1}^{(r)}-\mu_{\Yr})| \leq 
\|\eps_{0;r}\|_{E,q} C_{\delta Y}y)
\\\geq&P(|\frac{1}{n-p} \sum_{t=p+1}^n  \eps_{t;r}v^\top(Y_{t-1}^{(r)}-\mu_{\Yr})| \leq \nu( \{\eps_{t;r}v^\top Y_{t-1}^{(r)}, t \in \Z\})^{1+1/q}y) \\
\geq&1-c_q(n-p)^{-q+1} y^{-q} (1+\frac{\|\eps_{0;r}
v^\top Y_{-1}^{(r)}\|_{E,q}^q}{\nu(\{ \eps_{t;r} v^\top Y_{t-1}^{(r)}, t \in \Z\})^{q+1}})-c_q^\prime \exp(-c_q(n-p) y^2)\\
&-2\exp(-c_q (n-p) y^2 \frac{\nu(\{\eps_{t;r} v^\top Y_{t-1}^{(r)}, t \in \Z\})^{2+2/q}}{\|\eps_{0;r}v^\top  Y{-1}^{(r)}\|_{E,2}^2})\\
\geq&1-c_q(n-p)^{1-q}y^{-q}-(c_q^\prime+2) \exp(-c_q (n-p)y^2)=:C_q(n,y),
\end{align*}
where $c_q,c_q^\prime$ are constants depending on $q$ only. Similarly, we obtain $P(1/(n-p) \sum_{t=p+1}^n \eps_{t;r} \leq\|\eps_{0;r}\|_{E,q} C_{\delta Y}y)\geq C_q(n,y)$, $P(1/(n-p) \sum_{t=p+1}^n v^\top (Y_{t-1}^{(r)}-\mu_{\Yr}) \leq 
\|\eps_{0;r}\|_{E,q} C_{\delta Y}y)\geq C_q(n,y)$, and
$P(|1/(n-p) \sum_{t=p+1}^n (v^\top (Y_{t-1}^{(r)}-\mu_{\Yr}))^2-v^\top \Gamma v| \leq  
2 C_A/(1-C_\lambda) C_{\delta Y}y)\geq C_q(n,y)$. 

Let $\Omega_q(n,y)=\{\omega : \,  
|1/(n-p) \sum_{t=p+1}^n \eps_{t;r}|\leq C_{\delta Y} y \text{ and for all } v \in \R^{2p} \text{ with } \linebreak \|v\|_1=1 \text{ such that }
|1/(n-p) \sum_{t=p+1}^n v^\top (Y_t^{(r)}-\mu_{\Yr}) |\leq C_{\delta Y} y,
|1/(n-p) \sum_{t=p+1}^n \linebreak  v^\top (Y_t^{(r)}-\mu_{\Yr}) \eps_{t;r}|\leq \|\eps_{0;r}\|_{E,q} C_{\delta Y} y, \text{ and }
|1/(n-p) \sum_{t=p+1}^n (v^\top (Y_{t-1}^{(r)}-\mu_{\Yr}))^2-v^\top \Gamma v| \leq
2 C_A/(1-C_\lambda) C_{\delta Y}y\}$. We have $P(\Omega_q(n,y))\geq C_q(n,y)^4.$ Note that $|\bar{AB}-\bar A\bar B|\leq|\bar{AB}-\mu_A\mu_B|+|\mu_A(\bar B-\mu_B)|+|(\bar A-\mu_A)\mu_B|+|(\bar A-\mu_A)(\bar B-\mu_B)|$. Thus, we have for $\omega \in \Omega_q(n,y)$
\begin{align*}
    \|\hat w -w\|_2 &\leq \sqrt{2p} |v^\top \hat \gamma/\|v\|_1|/\Big(\rho(\Gamma^{-1})-2p(v^\top(\hat \Gamma-\Gamma)v/\|v\|_1^2)\Big) \\
    &\leq y\frac{\sqrt{2p} C_{\delta Y}(\|\eps_{0;r}\|_{E,q}+C_{\delta Y}y+\mu_r +\|\mu_{\Yr}\|_1)}{\rho(\Gamma^{-1})-y 2p C_{\delta Y}(2C_A/(1-C_\lambda)+2\|\mu_{\Yr}\|_1+yC_{\delta Y})}.
\end{align*}

Since $\hat \mu_r = 1/(n-p) \sum_{t=p+1}^n (\Yr_{t-1})^\top (w_r - \hat w_r)+1/(n-p)\sum_{t=p+1}^n \eps_{t;r}$, we have 
$
    |\hat\mu_r-\mu_r|\leq (\|\mu_{\Yr}\|_1+y C_{\delta Y})\|\hat w_r-w_r\|_1+y C_{\delta Y}.
$
\end{proof}

\bibliographystyle{apalike}
\bibliography{bib}


\end{document}